\pdfoutput=1 

\newif\ifconference
\conferencetrue

\ifconference
\documentclass[11pt,letter]{article}
\else
\documentclass[11pt,a4paper]{article}
\fi
\usepackage[utf8]{inputenc}
\usepackage[english]{babel}
\usepackage{microtype}
\usepackage{graphicx}
\usepackage{subfigure}
\usepackage{booktabs} 
\usepackage[hyperindex,breaklinks,bookmarks]{hyperref}
\usepackage[dvipsnames,svgnames]{xcolor}
\hypersetup{
	bookmarksdepth=2,
    unicode=true,          
    colorlinks=true,        
    linkcolor=Blue,          
    citecolor=DarkGreen,        
    filecolor=magenta,      
    urlcolor=cyan           
}

\usepackage{amsmath,amsfonts}
\usepackage{amssymb}
\usepackage[showonlyrefs]{mathtools}
\usepackage{amsthm}
\ifconference
\usepackage[portrait, left=1in, right=1in, top=1in, bottom=1in]{geometry}
\else
\usepackage[a4paper, portrait, left=1in, right=1in, top=1in, bottom=1in]{geometry}
\fi
\usepackage[numbers,sectionbib,longnamesfirst]{natbib}
\usepackage[shortlabels]{enumitem}
\usepackage[toc]{appendix}
\usepackage[capitalize,nameinlink,noabbrev]{cleveref}

\newcommand{\pp}{\operatorname{PolyPlace}}

\setcounter{secnumdepth}{2}


\newtheorem{theorem}{Theorem}
\newtheorem{corollary}[theorem]{Corollary}
\newtheorem{definition}[theorem]{Definition}
\newtheorem{lemma}[theorem]{Lemma}

\newtheorem{claim}[theorem]{Claim}
\newtheorem{fact}[theorem]{Fact}

\theoremstyle{remark}

\crefname{fact}{Fact}{Facts}    

\newcommand{\eps}{\varepsilon}
\DeclareMathOperator*{\sgn}{sgn}
\def\dder#1#2{\frac{\d#1}{\d#2}}
\def\der#1{\frac{\d}{\d#1}}
\newcommand\R{\mathbb{R}}
\newcommand\datasets{\mathcal{D}}
\let\SS=\undefined
\DeclareMathOperator{\LS}{LS}
\DeclareMathOperator{\SS}{SS}
\DeclareMathOperator{\Var}{Var}
\DeclareMathOperator{\Lap}{Lap}
\def\d{d}
\let\cref\Cref

\newcommand{\eqdef}{\overset{\text{def}}{=}}

\edef\ourtitle{Smooth Sensitivity Revisited: Towards Optimality}

\date{}
\title{\ourtitle}

\author{
	Richard Hladík\thanks{Supported by the VILLUM Foundation grant 54451. The work was done while this author was visiting BARC at the University of Copenhagen.}\\
    \texttt{rihl@uralyx.cz}\\
	ETH Zurich
    \and
	Jakub Tětek\thanks{Supported by the VILLUM Foundation grant 54451.}\\
	\texttt{j.tetek@gmail.com}\\
	BARC, Univ.~of Copenhagen
}

\begin{document}
\maketitle

\begin{abstract}
Smooth sensitivity is one of the most commonly used techniques for designing practical differentially private mechanisms. In this approach, one computes the smooth sensitivity of a given query $q$ on the given input $D$ and releases $q(D)$ with noise added proportional to this smooth sensitivity. One question remains: what distribution should we pick the noise from?

In this paper, we give a new class of distributions suitable for the use with smooth sensitivity, which we name the $\pp$ distribution. This distribution improves upon the state-of-the-art Student's T distribution in terms of standard deviation by arbitrarily large factors, depending on a ``smoothness parameter'' $\gamma$, which one has to set in the smooth sensitivity framework. Moreover, our distribution is defined for a wider range of parameter $\gamma$, which can lead to significantly better performance.

Moreover, we prove that the $\pp$ distribution converges for $\gamma \rightarrow 0$ to the Laplace distribution and so does its variance. This means that the Laplace mechanism is a limit special case of the $\pp$ mechanism. This implies that out mechanism is in a certain sense optimal for $\gamma \rightarrow 0$.
\end{abstract}

\ifconference
\thispagestyle{empty}
\newpage
\clearpage
\setcounter{page}{1}
\fi

\section{Introduction}
Differential privacy has in recent years become the golden standard for data privacy. Possibly the most common way of achieving differential privacy for numerical queries is the Laplace mechanism, which adds noise from the Laplace distribution scaled with the global sensitivity of the query.\footnote{The global sensitivity is defined as the greatest difference in the query value between some two neighboring datasets.} However, global sensitivity has one significant shortcoming: it is not adaptive to the dataset at hand. Namely, the error on every input is the same as that on the most difficult input, which determines the global sensitivity.

One of the most common ways to get practical mechanisms with instance-adaptive performance is to use smooth sensitivity. In contrast with global sensitivity, the smooth sensitivity can adapt to a specific dataset, meaning that easier inputs can have significantly lower errors. The approach works by computing the $\gamma$-smooth sensitivity (\Cref{def:smooth_sensitivity}) of the given query $q$ on the input $D$ at hand for a smoothness parameter $\gamma$, and releasing $q(D)$ with added noise scaled with the $\gamma$-smooth sensitivity, picked from an appropriately chosen distribution. The parameter $\gamma$ here controls a tradeoff between how large the $\gamma$-smooth sensitivity is on one hand, and how concentrated a distribution we can use for the noise on the other. This leaves one important question unanswered: what distribution should the noise be sampled from?

In this paper, we give a new distribution class suitable for use with smooth sensitivity, which we name the $\pp$ distribution. The specific distribution from the class that we use depends on the smoothness parameter $\gamma$ and the privacy parameter $\epsilon$. Our distribution has two advantages over the state-of-the-art Student's T distribution: (1) it achieves significantly smaller error, having standard deviation smaller by arbitrarily large factors (depending on $\gamma$), as shown in \Cref{fig:plot_stddev}, and (2) it is defined for any $\gamma < \eps$, whereas the previously used distributions all required $\gamma < \eps/2$. While this may seem as an improvement by a constant factor, it may in fact bring asymptotic decrease of the error as smooth sensitivity can change a lot even with a small change in $\gamma$. Our method also significantly improves upon the Laplace distribution for all but tiny values of $\gamma$ (Laplace distribution moreover provides only approximate DP when used with smooth sensitivity).
One of the main strengths of our mechanism is that it is completely general and it provides an improvement in any application relying on smooth sensitivity.

We also prove that for $\gamma \rightarrow 0$, our $\pp$ distributions converge in distribution to the Laplace distribution and the variance of $\pp$ also converges to that of the Laplace distribution. At the same time, with $\gamma \rightarrow 0$, the $\gamma$-smooth sensitivity $\SS^\gamma$ converges to the global sensitivity. This means that in some sense, our proposed mechanism generalizes the Laplace mechanism, which is a limit special case of our mechanism for $\gamma = 0$. Since the Laplace mechanism is optimal for pure differential privacy \cite{fernandes2021laplace}, this means that our method is also optimal for $\gamma \rightarrow 0$ in the sense that any quantile of $\pp$ converges to that of the optimal distribution.
Our result can be summarized as follows:
\begin{theorem}[Informal version of \Cref{thm:main_theorem}]
Let us have a query $q$ and two parameters $0 < \gamma < \eps$. Let $D$ be a dataset. There exists a class of distributions $\pp(s, \alpha)$ such that releasing
\[
q(D) + \pp\left(\frac{\SS_q^\gamma(D)}{\gamma}, \frac{\eps}{\gamma}\right)
\]
is $\eps$-DP. Moreover, for $\gamma/\eps \rightarrow 0$, the variance is up to a factor $(1+o(1))$ the same as scaling the Laplace distribution with the smooth sensitivity. Also, 
for any value $a > 0$ it holds that $\pp(\frac{a}{\gamma}, \frac{\eps}{\gamma}) \xrightarrow{\text{d}} \Lap(a/\eps)$.
\end{theorem}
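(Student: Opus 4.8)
The plan is to establish the three assertions in turn, from the easiest to the hardest. Throughout, write $f_{s,\alpha}$ for the density of $\pp(s,\alpha)$; it has a heavy core and polynomially decaying tails, and in particular (as one should check first) it is a genuine probability distribution exactly when $\alpha>1$, which under the instantiation $\alpha=\eps/\gamma$ is precisely the hypothesis $\gamma<\eps$.

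\emph{Convergence in distribution.} Plug $s=a/\gamma$ and $\alpha=\eps/\gamma$ into $f_{s,\alpha}$ and let $\gamma\to 0$. The tail shape obeys $(1+\gamma|x|/a)^{-\eps/\gamma}\to e^{-\eps|x|/a}$ pointwise, and a one-line estimate shows the normalizing constant converges to $\eps/(2a)$, which is the normalizer of $\Lap(a/\eps)$; hence $f_{a/\gamma,\,\eps/\gamma}$ converges pointwise to the density of $\Lap(a/\eps)$. Since all of these are probability densities on $\R$, Scheffé's lemma upgrades pointwise convergence to convergence in total variation, and a fortiori $\pp(a/\gamma,\eps/\gamma)\xrightarrow{\text{d}}\Lap(a/\eps)$. (Convergence of CDFs, or of characteristic functions combined with Lévy's continuity theorem, would work equally well.)

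\emph{Convergence of the variance.} Distributional convergence does not by itself carry second moments across the limit, so I would obtain this separately from a closed form: for a polynomially-tailed density the variance is finite once $\alpha$ exceeds a small constant — hence for all sufficiently small $\gamma$ — and equals a rational expression in $s$ and $\alpha$ of the shape $\Var[\pp(s,\alpha)]=\Theta(s^2/\alpha^2)$ with explicit constant; substituting $s=a/\gamma$, $\alpha=\eps/\gamma$, the powers of $\gamma$ cancel and the limit is $2(a/\eps)^2=\Var[\Lap(a/\eps)]$, so the ratio is $1+O(\gamma/\eps)$. (A soft alternative: a tail bound uniform over small $\gamma$ gives uniform integrability of $\{X_\gamma^2\}$, which together with the distributional convergence yields the same conclusion.)

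\emph{Privacy.} This is the substantive part. For neighboring datasets $D\sim D'$ we have $|q(D)-q(D')|\le\LS_q(D)\le\SS_q^\gamma(D)$ and, by $\gamma$-smoothness, $\SS_q^\gamma(D)/\SS_q^\gamma(D')\in[e^{-\gamma},e^{\gamma}]$. Writing $s=\SS_q^\gamma(D)/\gamma$, the output density at $D$ is $f_{s,\alpha}(\cdot-q(D))$ and the output density at $D'$ is the same shape with scale multiplied by some $e^{\lambda}$, $|\lambda|\le\gamma$, and argument shifted by some $\delta$ with $|\delta|\le\gamma s$. Pure $\eps$-DP therefore reduces to the single pointwise inequality
\[
\sup_{x\in\R}\ \frac{f_{s,\alpha}(x)}{f_{e^{\lambda}s,\,\alpha}(x+\delta)}\ \le\ e^{\eps}\qquad\text{for all }|\lambda|\le\gamma,\ |\delta|\le\gamma s ,
\]
(and its mirror image), which is exactly the inequality the definition of $\pp$ is engineered for: the polynomial tail makes the pure-dilation factor at most $e^{\gamma(\alpha-1)}=e^{\eps-\gamma}$, while the core is shaped so that the pure-shift factor stays controlled, and the choice $\alpha=\eps/\gamma$ is what lets the whole thing go through for every $\gamma<\eps$ rather than only $\gamma<\eps/2$. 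I expect the main obstacle to be precisely the \emph{joint} worst case: the dilation factor is worst as $|x|\to\infty$ whereas the shift factor is worst near the core, so bounding the product by the product of the two separate maxima is far too lossy, and one must analyze the intermediate regime of $x$ (optimizing over $\lambda$ and $\delta$ simultaneously) to verify that the product never exceeds $e^{\eps}$. Once this inequality is established it gives $\eps$-DP directly — integrate the density-ratio bound over any output event, equivalently invoke the smooth-sensitivity calibration lemma — and combined with the two limits above this completes the statement.
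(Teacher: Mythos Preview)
Your plans for the two limit claims match the paper: pointwise convergence of densities together with Scheff\'e's lemma for the distributional limit, and an explicit closed-form variance followed by a Taylor expansion for the second-moment claim. Those parts are fine.

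The privacy argument has a real gap. You correctly reduce to the pointwise density-ratio inequality and correctly diagnose that multiplying a worst-case shift bound by a worst-case dilation bound is too lossy, but ``analyze the intermediate regime of $x$'' is not a method---it only restates the difficulty. The paper does \emph{not} case-split on $x$. Instead it writes $\log\big(f_{s',\alpha}(x-q(D'))/f_{s,\alpha}(x-q(D))\big)$ as a line integral along a one-parameter path $t\in[0,1]$ that interpolates scale and shift simultaneously, and proves the key \emph{differential equality}
\[
\left[\Big|\tfrac{d}{d\lambda}\log f_{1/\gamma,\eps/\gamma}(x+\lambda)\Big|
+\Big|\tfrac{d}{d\lambda}\log f_{e^{\lambda\gamma}/\gamma,\eps/\gamma}(x)\Big|\right]_{\lambda=0}=\eps
\]
for every $x\neq 0$---an equality, not merely an inequality. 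This identity is precisely what the two-branch definition of $f_{s,\alpha}$ is engineered to satisfy: in each branch the absolute values unwrap with fixed signs and the two terms collapse algebraically to $\alpha/s=\eps$. It is the ``infinitesimal'' statement that shift-cost and scale-cost trade off exactly, so that once one shows the path coefficients satisfy $|\lambda_r|\le 1$ and $e^{-\lambda_r t\gamma}|\lambda_s|\le 1$, the integrand is $\le\eps$ and integrating over $t$ yields the bound $e^\eps$. Without this identity (or something equivalent), a direct supremum over $(x,\lambda,\delta)$ is an unguided computation, and your outline gives no indication of why it would close. As a side remark, your pure-dilation estimate $e^{\gamma(\alpha-1)}$ is off: with tail $\propto(1+|x|/s)^{-\alpha-1}$ the worst dilation ratio as $|x|\to\infty$ is $e^{\gamma\alpha}=e^{\eps}$, which already exhausts the entire budget and makes it even clearer that shift and dilation cannot be bounded separately and then added.
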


\begin{figure}
    \centering
    \includegraphics[width=.6\linewidth]{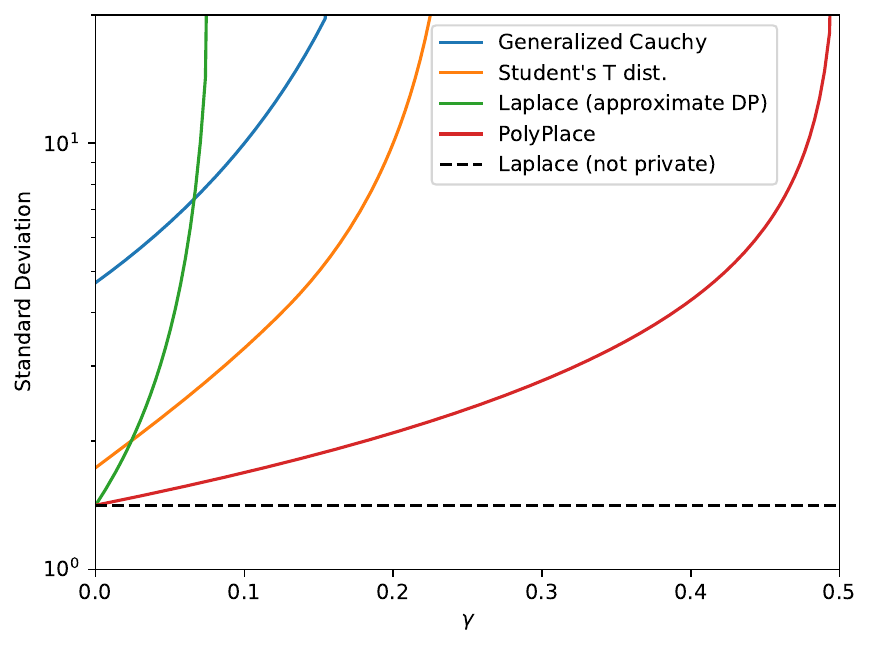}
	\caption{%
 Comparison of the standard deviation of our and the other distributions that can be used with the smooth sensitivity framework. Laplace distribution when used with smooth sensitivity only provides approximate differential privacy; we assume here $\delta = 10^{-5}$. The shape parameters for the two other distributions are numerically selected so as to minimize their standard deviations. The standard deviations are calculated in \Cref{sec:stddev_of_others}. The dashed line represents the standard deviation of the Laplace distribution which, however, needs to be used with the larger global sensitivity and not smooth sensitivity.}
    \label{fig:plot_stddev}
\end{figure}

Note that this convergence result is meaningful despite the fact that the smooth sensitivity converges to the global sensitivity for $\gamma \rightarrow 0$: This result implies that for fixed small values of $\gamma$, the distribution is close to optimal, while the smooth sensitivity will, in general, still be asymptotically smaller than the global sensitivity. This refutes the following argument, that one might naturally make against our result: Namely, that since the smooth sensitivity is, for $\gamma \rightarrow 0$, equal to the global sensitivity, this result does not show anything that would not already hold for the Laplace mechanism.

\subsubsection{Smooth Sensitivity vs. Inverse Sensitivity Mechanism}
Recently, an alternative to smooth sensitivity appeared in the form of the inverse sensitivity mechanism \cite{asi2020instance,asi2020near}. This approach has nice theoretical guarantees, and mechanisms that are optimal up to logarithmic factors have been designed by relying on inverse sensitivity. Admittedly, this reduces the overall impact of our approach. However, we are still convinced that smooth sensitivity has an important place in the repertoire of differential privacy techniques, and that it will remain being used in practice. 

The inverse sensitivity mechanism comes with its own problems. First, inverse sensitivity may be difficult to compute \cite{asi2020instance,fang2022shifted}, even in cases when smooth sensitivity is easy to compute, such as for counting triangles in a graph. While it is often possible to efficiently approximately implement the inverse sensitivity mechanism \cite{asi2020instance,fang2022shifted}, this already leads to worsened performance -- it remains to be seen whether this approach will be competitive in practice for a wide range of problems.

Moreover, there are practical considerations that often favor the smooth sensitivity approach. First, it is possible to have an efficient generic implementation just based on an oracle that returns the smooth sensitivity. This is clearly not possible for the inverse sensitivity mechanism -- the inverse sensitivity mechanism instantiates the exponential mechanism which in general cannot be implemented in time sublinear in the size of the universe. It is also due to this ease of use that there is extensive literature on the uses of smooth sensitivity on the applied side, as we discuss in \Cref{sec:related_work}. We think that even just the existence of this large amount of literature on smooth sensitivity would make improving smooth sensitivity an important undertaking.

Another issue with the inverse sensitivity mechanism is that it may in some cases have worse constants than smooth sensitivity. Specifically, it relies on the exponential mechanism which loses a factor of 2 in some instances. There are other mechanisms that could be used in place of the exponential mechanism, most notably the report-noisy-max mechanism \cite{dwork2014algorithmic}, that may have better constants, but even those lose a factor of 2 in the standard deviation in some cases. For example, the Laplace mechanism can be seen as an instantiation of the exponential mechanism or report noisy max, but in doing so, one would lose a factor of 2. Our mechanism for smooth sensitivity does not have this issue (unlike the previously known smooth sensitivity mechanisms that achieved pure differential privacy).

\subsection{Technical Overview}
The principle that motivated the $\pp$ distribution is that we wanted to make sure that for $\eps \rightarrow 0$, the privacy loss will be the same regardless of what value we released (recall that the privacy loss random variable is a function of the value that we release). Indeed, this is precisely the property that the Laplace distribution has in the context of global sensitivity. We formalize this property as a claim about some derivatives involving the density of the $\pp$ distribution (\Cref{lem:differential_equality}). Intuitively speaking, these derivatives capture an ``infinitesimal privacy loss'' as if we had two neighboring datasets whose query values as well as smooth sensitivities differ infinitesimally. In \Cref{lem:privacy_loss_bound}, we essentially integrate this infinitesimal privacy loss in order to express the true privacy loss. Doing so takes bounds on the infinitesimal privacy loss and gives bounds on the actual privacy loss, proving differential privacy.

Intuitively speaking, our improvement comes from two sources. The first is unsurprisingly that our distribution simply has a smaller variance than the Student's $T$ distribution.
The second source of our improvement is more subtle:
we do not split the privacy budget the way previous work did. Intuitively speaking, there are two sources of privacy loss when we use smooth sensitivity: from the change in the query value between adjacent inputs (corresponding to shifting of noise) and from the change in the smooth sensitivity between neighboring datasets (corresponding to scaling noise). The standard approach is to split the privacy budget between the privacy loss caused by the shifting and the privacy loss caused by scaling.
Surprisingly, if the scaling is not too large, it does not contribute anything to the worst-case privacy loss with the $\pp$ distribution. 
This allows us in some sense to use all of the privacy budget to ``pay'' for the change in query value between neighboring datasets, leading to a more efficient use of the privacy budget.


\subsection{Related Work} \label{sec:related_work}
Smooth sensitivity has been proposed by \citet{nissim2007smooth} as an alternative to global sensitivity, rectifying the global sensitivity's drawback that it is determined by the most difficult input. They have shown how it can be used to privately approximately release the median, minimum, minimum spanning tree, and the number of triangles in a graph. \citet{bun2019average} later gave new distributions suitable for use with smooth sensitivity. Notably, they propose the use of the Student's T distribution. They also propose other distributions but those only work with the weaker concentrated differential privacy \cite{dwork2016concentrated}, which we do not focus on in this paper.

Since its inception, smooth sensitivity has been used in many applications. An incomplete list includes: principal component analysis \cite{gonem2018smooth}, deep learning \cite{sun2020differentially}, statistics from genome-wide association studies \cite{yamamoto2023privacy}, generating synthetic datasets using Bayesian models \cite{li2018bayesian}, generating synthetic graphs \cite{wang2013preserving}, random forests \cite{xin2019differentially}, or some graph centrality measures \cite{laeuchli2022analysis}.

Other frameworks for beyond-worst-case differentially private mechanisms include propose-test-release \cite{dwork2009differential}, or privately bounding local sensitivity \cite[Section 3.4]{vadhan2017complexity}. However, these techniques only give approximate differential privacy and they generally speaking tend to need larger datasets to work. Namely, to get non-trivial guarantees, they usually need a dataset of size $\Omega\big(\frac{\log\delta^{-1}}{\eps}\big)$.\footnote{This is the case for propose-test-release as it otherwise always fails. For privately bounding local sensitivity, this is the case assuming the local sensitivity has global sensitivity 1 (it often has bigger sensitivity in which case the situation is even worse).} For these reasons as well as for the sake of brevity, we choose not to focus on these techniques.

\section{Preliminaries}
\subsection{Differential Privacy}

Let us have a symmetric notion of dataset adjacency $\sim$. 
A randomized algorithm $\mathcal{A}$ is said to be $\eps$-differentially private if, for any two adjacent datasets $D,D'$, and for any measurable set $S$, the following condition holds:
\begin{equation}
\Pr[\mathcal{A}(D) \in S] \leq e^{\varepsilon} \cdot \Pr[\mathcal{A}(D') \in S]
\end{equation}
where $\varepsilon$ controls the trade-off between privacy and data utility. Smaller values of $\varepsilon$ provide stronger privacy guarantees, but may limit the accuracy of the analysis. One of the most notable ways of achieving differential privacy is using the notion of smooth sensitivity:
\begin{definition}[Smooth sensitivity] \label{def:smooth_sensitivity}
Let us have a query $q: \datasets \rightarrow \R$. We define the local sensitivity of $q$ at $D \in \datasets$ as
\[
\LS_q(D) = \max \left\{\,|q(D) - q(D')|: D' \sim D\,\right\}.
\]
In turn, we define the $\gamma$-\emph{smooth sensitivity} of $q$ at $D$ as 
\[
\SS_q^{\gamma}(D)=\max \left\{\,\LS_q(D') \cdot e^{-\gamma d(D, D')}: D' \in \datasets\,\right\},
\]
	where $d(D, D')$ denotes distance induced by $\sim$. 
\end{definition}

\subsection{Calculus}
\label{sec:calculus}
We now give several statements from calculus. The following is a standard result.
\begin{fact} \label{fact:total_derivatives}
Let us have $f: \mathbb{R}^n \rightarrow \mathbb{R}$. If at a point $\mathbf{x} \in \mathbb{R}^n$, all partial derivatives exist and are continuous, then the function is differentiable at $\mathbf{x}$.
\end{fact}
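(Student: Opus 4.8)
The plan is to reduce the statement to the ordinary single–variable mean value theorem, applied one coordinate at a time. First I would fix $\mathbf{x} = (x_1,\dots,x_n)$ and take an increment $\mathbf{h} = (h_1,\dots,h_n)$ small enough that the whole ``staircase'' of axis-aligned segments from $\mathbf{x}$ to $\mathbf{x}+\mathbf{h}$ lies inside the neighborhood on which the partial derivatives are assumed to exist. (Here I read the hypothesis ``the partial derivatives exist and are continuous at $\mathbf{x}$'' in the only sensible way, namely that they exist on a neighborhood of $\mathbf{x}$ and are continuous \emph{at} $\mathbf{x}$, since continuity at a point presupposes being defined near it.) Then I would telescope: writing $\mathbf{y}^{(0)} = \mathbf{x}$ and letting $\mathbf{y}^{(i)}$ be $\mathbf{x}$ with its first $i$ coordinates replaced by $x_j + h_j$, we have $f(\mathbf{x}+\mathbf{h}) - f(\mathbf{x}) = \sum_{i=1}^n \bigl( f(\mathbf{y}^{(i)}) - f(\mathbf{y}^{(i-1)}) \bigr)$, where the $i$-th summand is a change in the single variable $x_i$ only.

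Second, to each summand I would apply the single-variable mean value theorem to $t \mapsto f(\mathbf{y}^{(i-1)} + t\,\mathbf{e}_i)$ on the interval between $0$ and $h_i$; this is legitimate precisely because $\partial_i f$ exists along that segment. This gives $f(\mathbf{y}^{(i)}) - f(\mathbf{y}^{(i-1)}) = \partial_i f(\mathbf{z}^{(i)})\, h_i$ for some point $\mathbf{z}^{(i)}$ on the $i$-th segment, and every such $\mathbf{z}^{(i)} \to \mathbf{x}$ as $\mathbf{h} \to \mathbf{0}$.

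Third, I would check the definition of (total) differentiability directly with candidate derivative $L\mathbf{h} = \sum_i \partial_i f(\mathbf{x})\, h_i$. Subtracting, $f(\mathbf{x}+\mathbf{h}) - f(\mathbf{x}) - L\mathbf{h} = \sum_{i=1}^n \bigl( \partial_i f(\mathbf{z}^{(i)}) - \partial_i f(\mathbf{x}) \bigr) h_i$. Using $|h_i| \le \|\mathbf{h}\|$ and the continuity of each $\partial_i f$ at $\mathbf{x}$ (so that $\partial_i f(\mathbf{z}^{(i)}) - \partial_i f(\mathbf{x}) \to 0$), the right-hand side is bounded by $\|\mathbf{h}\| \cdot \sum_i |\partial_i f(\mathbf{z}^{(i)}) - \partial_i f(\mathbf{x})| = o(\|\mathbf{h}\|)$, which is exactly differentiability at $\mathbf{x}$.

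I do not expect a genuine obstacle here — the result is classical — so the ``hard part'' is only bookkeeping: ensuring all the intermediate points $\mathbf{y}^{(i)}$ and mean-value points $\mathbf{z}^{(i)}$ stay in the domain where the partials are defined (handled by shrinking $\|\mathbf{h}\|$), and being explicit that continuity is invoked only as a limit of $\partial_i f$ along points approaching $\mathbf{x}$. Alternatively, one could phrase the same argument inductively on $n$, peeling off one variable at a time, but the explicit telescoping sum is cleaner to write.
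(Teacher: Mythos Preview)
Your proposal is correct and is precisely the standard textbook argument (telescoping along coordinate axes plus the single-variable mean value theorem, then invoking continuity of the partials at $\mathbf{x}$). The paper, however, does not prove this fact at all: it simply introduces it with ``The following is a standard result'' and states it without proof, reserving the deferred-proofs appendix only for the two subsequent chain-rule facts. So there is nothing to compare against; you have supplied a full proof where the paper intentionally omits one.
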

For the following two statements, we choose to give statements of special cases that suffice for our needs, instead of giving the more general cases. We defer their proofs to \cref{sec:deferred_proofs}.
\begin{fact} \label{fact:splitting_derivatives}
Let us have a function $f: \R^2 \rightarrow \R$. At any $x$ at which $f(x,x)$ is differentiable, it holds
	\begin{align*}
		\frac{d}{dx}\,f(x,x)
		&=
		\left[\der{\lambda}\,f(x+\lambda,x) + \der{\lambda}\,f(x,x+\lambda)\right]_{\lambda=0} .
	\end{align*}
\end{fact}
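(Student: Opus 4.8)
The plan is to obtain this from the ordinary multivariable chain rule applied at the single point $\mathbf p := (x,x)$. The first step is to pin down what the hypothesis buys us: I read ``$f(x,x)$ is differentiable'' as differentiability of $f$ at $\mathbf p$ in the Fréchet sense, i.e.\ that
\[
f(\mathbf p + \mathbf h) = f(\mathbf p) + \partial_1 f(\mathbf p)\,h_1 + \partial_2 f(\mathbf p)\,h_2 + o(\|\mathbf h\|)
\]
as $\mathbf h = (h_1,h_2) \to \mathbf 0$, where $\partial_1,\partial_2$ denote the partial derivatives of $f$. If one instead only knows that the two partials of $f$ exist and are continuous in a neighbourhood of $\mathbf p$, then \Cref{fact:total_derivatives} already delivers this expansion, so this reading is exactly the one compatible with the surrounding toolkit.

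The second step is to read off the three one-variable derivatives in the statement by restricting the expansion above to three lines through $\mathbf p$. Setting $\mathbf h = (\lambda,0)$ gives $f(x+\lambda,x) = f(\mathbf p) + \partial_1 f(\mathbf p)\,\lambda + o(\lambda)$, hence $\bigl[\der{\lambda} f(x+\lambda,x)\bigr]_{\lambda=0} = \partial_1 f(\mathbf p)$; symmetrically, $\mathbf h = (0,\lambda)$ yields $\bigl[\der{\lambda} f(x,x+\lambda)\bigr]_{\lambda=0} = \partial_2 f(\mathbf p)$; and $\mathbf h = (\lambda,\lambda)$, for which $\|\mathbf h\| = \sqrt{2}\,|\lambda| = \Theta(|\lambda|)$ so the remainder is still $o(\lambda)$, gives $f(x+\lambda,x+\lambda) = f(\mathbf p) + \bigl(\partial_1 f(\mathbf p) + \partial_2 f(\mathbf p)\bigr)\lambda + o(\lambda)$, whence $\frac{d}{dx} f(x,x) = \partial_1 f(\mathbf p) + \partial_2 f(\mathbf p)$. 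Adding the first two identities and comparing with the third is the whole content of the claim.

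The only genuine obstacle is the interpretation of the hypothesis, and I would flag it explicitly: mere differentiability of the one-variable restriction $x \mapsto f(x,x)$ is \emph{not} sufficient — for instance $f(x,y) = xy(x+y)/(x^2+y^2)$ extended by $0$ at the origin has $f(t,t)=t$, yet both partials vanish at $(0,0)$, so the right-hand side would be $0 \neq 1$. Thus the argument really does rely on joint differentiability of $f$ at $\mathbf p$, which is precisely what \Cref{fact:total_derivatives} is invoked for. Everything past that point is routine bookkeeping with the little-$o$ remainder term.
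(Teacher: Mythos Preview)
Your argument is correct and is essentially the same as the paper's: both read the hypothesis as Fr\'echet differentiability of $f$ at $(x,x)$ and deduce the identity from the linear approximation, the paper by quoting the multivariable chain rule with $g=h=\mathrm{id}$, you by expanding the $o(\|\mathbf h\|)$ remainder along the three lines directly. Your added counterexample showing that mere differentiability of $t\mapsto f(t,t)$ does not suffice is a genuinely useful clarification of the hypothesis that the paper leaves implicit.
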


\begin{fact} \label{fact:differentiation_scaling}
Let us have a real function $f$. It holds that
\[
\left[\frac{d}{dx}\, f(c\, x)\right]_{x = 0} = c\, \left[\frac{d}{dx}\, f(x)\right]_{x = 0} \,.
\]
\end{fact}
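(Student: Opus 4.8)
The plan is to reduce the claim to the one–dimensional chain rule, taking a little care about the degenerate case $c = 0$ and about the fact that we only want to assume $f$ is differentiable at the single point $0$ (so that the statement is as strong as possible). First I would dispose of the case $c = 0$: then $f(c\,x) = f(0)$ is a constant function of $x$, so $\left[\der{x}\, f(c\,x)\right]_{x=0} = 0$, which agrees with $c\left[\der{x}\, f(x)\right]_{x=0} = 0$, and there is nothing more to prove.

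Now suppose $c \neq 0$. Here I would argue directly from the definition of the derivative as a limit of difference quotients rather than invoking a packaged chain rule. For $x \neq 0$ we have $c\,x \neq 0$, so we may write
\[
\frac{f(c\,x) - f(0)}{x} \;=\; c \cdot \frac{f(c\,x) - f(0)}{c\,x}.
\]
As $x \to 0$ we also have $c\,x \to 0$, so by the assumed differentiability of $f$ at $0$ the factor $\frac{f(c\,x) - f(0)}{c\,x}$ converges to $\left[\der{x}\, f(x)\right]_{x=0}$. Hence the left-hand side converges as well, which simultaneously shows that $x \mapsto f(c\,x)$ is differentiable at $0$ and that its derivative there equals $c\left[\der{x}\, f(x)\right]_{x=0}$, as claimed.

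I do not expect any genuine obstacle — this is a routine fact — but the one point worth being careful about is the division by $c\,x$ above. In the general chain rule one is forced to divide by $g(x) - g(0)$, which can vanish at points arbitrarily close to $0$ and so requires a more delicate argument; in our situation $g(x) = c\,x$ with $c \neq 0$ is nonzero for every $x \neq 0$, so this difficulty simply does not occur, and splitting into the cases $c = 0$ and $c \neq 0$ covers the only edge case.
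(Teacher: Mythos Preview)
Your argument is correct. The paper proves this fact by invoking the one-variable chain rule directly: with $g(x)=cx$ it cites the statement $h'(x)=f'(g(x))\,g'(x)$ and evaluates at $x=0$ to get $c\,f'(0)$. You instead bypass the chain rule and work from the limit definition of the derivative, splitting off the trivial case $c=0$ and, for $c\neq 0$, factoring the difference quotient as $c\cdot\frac{f(cx)-f(0)}{cx}$. This is a more elementary route and has the mild advantage that it makes explicit that only differentiability of $f$ at the single point $0$ is needed, and that it handles $c=0$ separately; the paper's version simply appeals to the chain rule as a black box. Both arguments are entirely routine, and the difference is one of presentation rather than substance.
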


\section{Improved Smooth Sensitivity Mechanism}

\begin{definition} \label{def:our_distribution}
We define the $\pp(s,\alpha)$ distribution with scale $s > 0$ and shape $\alpha > 1$ by its probability density function 
\begin{align*}
	&
f_{s,\alpha}(x)
	\begin{cases}
   N_{s,\alpha }(\alpha-1)\left(1 - \frac{|x|}{s}\right)^{\alpha - 1} &\hskip-1mm \text{if } \frac{|x|}{s} < \alpha^{-1}, \\
  N_{s, \alpha} (\alpha + 1) \left(1 - \frac{1}{\alpha^2}\right)^\alpha \left(1 + \frac{|x|}{s}\right)^{-\alpha - 1} &\hskip -1mm \text{otherwise,}
\end{cases}
\end{align*}
for $N_{s,\alpha} = \frac{\alpha }{2 \left(2 \left(\frac{\alpha -1}{\alpha }\right)^{\alpha }+\alpha -1\right) s}$.
\end{definition}

\begin{figure}
    \centering
    \includegraphics[width=.6\linewidth]{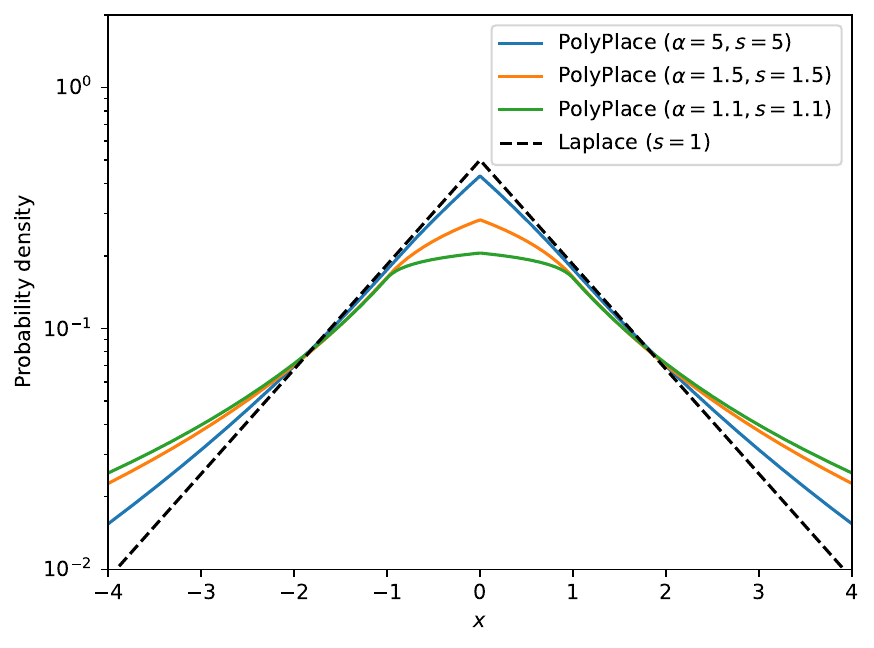}
	\caption{%
		A log-scale plot of selected members of the Laplace and $\pp$ families of distributions. The values of $\alpha$ and $s$ correspond (for $\SS(D) = 1$ and $\eps = 1$) to $\gamma \approx 0.91, \gamma = 2/3$, and $\gamma = 0.2$, respectively. Note that the distributions get closer to Laplace, which is explained by the convergence from \Cref{thm:main_theorem}.}
  \label{fig:plot_pdf}
\end{figure}

We plot a few example members of this family of distributions in \cref{fig:plot_pdf}, where we also compare it with the Laplace distribution.

We start by claiming that this distribution is correctly defined. We defer the proof to \Cref{sec:deferred_proofs}.
\begin{lemma} \label{lem:correctly_defined}
The distribution $\pp(s,\alpha)$ is correctly defined for $s >0, \alpha>1$. That is, the density is non-negative and integrates to $1$.
\end{lemma}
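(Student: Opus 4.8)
The plan is to check the two required properties separately, both by direct computation.

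For non‑negativity, I would first note that since $\alpha>1$ and $s>0$ the normalizing constant $N_{s,\alpha}$ is a ratio of strictly positive quantities: indeed $\frac{\alpha-1}{\alpha}\in(0,1)$ so $\bigl(\frac{\alpha-1}{\alpha}\bigr)^{\alpha}>0$, and $\alpha-1>0$, hence $N_{s,\alpha}>0$. On the region $|x|/s<\alpha^{-1}$ we have $1-|x|/s\in(0,1]$, so $(1-|x|/s)^{\alpha-1}>0$ and the first branch is positive; on the complementary region $1+|x|/s\ge 1+\alpha^{-1}>0$ and $(1-\alpha^{-2})^{\alpha}>0$, so the second branch is non‑negative as well. (The tail decays like $|x|^{-\alpha-1}$ with $\alpha+1>2$, so $f_{s,\alpha}$ is in particular integrable, which is implicitly needed for the next step.)

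For the normalization I would use the evenness $f_{s,\alpha}(-x)=f_{s,\alpha}(x)$ to write $\int_{\R} f_{s,\alpha}=2\int_0^{\infty} f_{s,\alpha}$, and then split the latter at the breakpoint $x=s/\alpha$. On $[0,s/\alpha)$ the substitution $u=1-x/s$ turns the first branch into
\[
N_{s,\alpha}(\alpha-1)s\int_{(\alpha-1)/\alpha}^{1}u^{\alpha-1}\,du
=\frac{N_{s,\alpha}(\alpha-1)s}{\alpha}\Bigl(1-\bigl(\tfrac{\alpha-1}{\alpha}\bigr)^{\alpha}\Bigr),
\]
while on $[s/\alpha,\infty)$ the substitution $v=1+x/s$ turns the second branch into
\[
N_{s,\alpha}(\alpha+1)\bigl(1-\alpha^{-2}\bigr)^{\alpha}s\int_{(\alpha+1)/\alpha}^{\infty}v^{-\alpha-1}\,dv
=\frac{N_{s,\alpha}(\alpha+1)s}{\alpha}\bigl(1-\alpha^{-2}\bigr)^{\alpha}\bigl(\tfrac{\alpha+1}{\alpha}\bigr)^{-\alpha}.
\]

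The one algebraic step worth highlighting is the identity
\[
\bigl(1-\tfrac{1}{\alpha^{2}}\bigr)^{\alpha}\bigl(\tfrac{\alpha+1}{\alpha}\bigr)^{-\alpha}
=\Bigl(\tfrac{(\alpha-1)(\alpha+1)}{\alpha^{2}}\cdot\tfrac{\alpha}{\alpha+1}\Bigr)^{\alpha}
=\bigl(\tfrac{\alpha-1}{\alpha}\bigr)^{\alpha},
\]
which turns the second piece into $\frac{N_{s,\alpha}(\alpha+1)s}{\alpha}\bigl(\tfrac{\alpha-1}{\alpha}\bigr)^{\alpha}$. Adding the two pieces, the $\bigl(\tfrac{\alpha-1}{\alpha}\bigr)^{\alpha}$ terms combine with coefficient $-(\alpha-1)+(\alpha+1)=2$, giving $\int_0^{\infty} f_{s,\alpha}=\frac{N_{s,\alpha}s}{\alpha}\bigl(\alpha-1+2\bigl(\tfrac{\alpha-1}{\alpha}\bigr)^{\alpha}\bigr)$, and substituting the stated value of $N_{s,\alpha}$ makes this exactly $\frac12$, so $\int_{\R}f_{s,\alpha}=1$. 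There is no genuine obstacle here; the only things to watch are the exponent bookkeeping in the displayed identity and keeping the transformed limits of integration straight. As a consistency check (not needed for the lemma), one can verify that the two branches agree at $x=s/\alpha$, so $f_{s,\alpha}$ is in fact continuous.
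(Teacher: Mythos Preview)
Your proof is correct and follows essentially the same approach as the paper's: exploit symmetry to reduce to $\int_0^\infty f_{s,\alpha}=\tfrac12$, split at the breakpoint $x=s/\alpha$, and compute each piece directly. The only cosmetic differences are that the paper sets $s=1$ by scaling first and writes down closed-form antiderivatives $I_1,I_2$ rather than your substitutions $u=1-x/s$, $v=1+x/s$; your version is in fact slightly more complete since you spell out the non-negativity argument, which the paper's deferred proof omits.
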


We now state our main result.
\begin{theorem}\label{thm:main_theorem}
Let us have a query $q: \datasets \rightarrow \R$. Let us have two parameters $0 < \gamma < \eps$. Then the mechanism
\[
M(D) = q(D) + \pp\left(\frac{\SS_q^\gamma(D)}{\gamma},\frac{\eps}{\gamma}\right)
\]
is $\eps$-DP. Moreover, for $\gamma/\eps \rightarrow 0$, it holds that 
\[
	\Var(M(D)) = (1+O(\gamma/\eps)) \, \frac{2\SS_q^\gamma(D)^2}{\eps^2} 
\]
and for any value $a > 0$ it holds that $\pp\big(\frac{a}{\gamma},\frac{\eps}{\gamma}\big) \xrightarrow{\text{d}} \Lap(a/\eps)$.
\end{theorem}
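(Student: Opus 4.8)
The plan is to prove the three assertions separately: $\eps$-differential privacy of $M$, the asymptotic formula for $\Var(M(D))$, and the convergence $\pp(a/\gamma,\eps/\gamma)\xrightarrow{\mathrm d}\Lap(a/\eps)$; only privacy is delicate. For privacy, fix adjacent $D\sim D'$ and abbreviate $a=q(D)$, $b=q(D')$, $s=\SS_q^\gamma(D)/\gamma$, $s'=\SS_q^\gamma(D')/\gamma$, $\alpha=\eps/\gamma$; it suffices to show the pointwise privacy loss $\mathcal L(x)=\ln f_{s,\alpha}(x-a)-\ln f_{s',\alpha}(x-b)$ is at most $\eps$ for every $x$ (the reverse bound follows by symmetry of $\sim$). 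I would feed in two facts from \Cref{def:smooth_sensitivity}: $|a-b|\le\LS_q(D)\le\SS_q^\gamma(D)$ and symmetrically $|a-b|\le\SS_q^\gamma(D')$, hence $|a-b|\le\gamma\min(s,s')$; and the standard smoothness property $e^{-\gamma}\le\SS_q^\gamma(D')/\SS_q^\gamma(D)\le e^{\gamma}$, i.e.\ $|\ln(s'/s)|\le\gamma$. Since $f_{s,\alpha}(y)=s^{-1}f_{1,\alpha}(y/s)$, write $M(z)=\ln f_{1,\alpha}(z)$, so $\ln f_{s,\alpha}(y)=-\ln s+M(y/s)$, and connect the two mechanisms by the path $a(r)=b+r(a-b)$, $s(r)=(s')^{1-r}s^{r}$, $r\in[0,1]$, with $g(r)=-\ln s(r)+M\big((x-a(r))/s(r)\big)$, so that $\mathcal L(x)=\int_0^1 g'(r)\,dr$ ($g$ is locally Lipschitz in $r$, hence absolutely continuous, so this is legitimate). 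Differentiating and splitting the ``shift'' and ``scale'' contributions (this is where \Cref{fact:splitting_derivatives} and \Cref{fact:differentiation_scaling} enter) gives, with $u=u(r):=(x-a(r))/s(r)$ and $\tau:=\ln(s'/s)$,
\[
g'(r)=\tau\,\big(1+u\,M'(u)\big)-\frac{a-b}{s(r)}\,M'(u).
\]
From \Cref{def:our_distribution} one reads off $M'(u)=-(\alpha-1)\sgn(u)/(1-|u|)$ for $|u|<1/\alpha$ and $M'(u)=-(\alpha+1)\sgn(u)/(1+|u|)$ for $|u|>1/\alpha$, with $M$ being $C^1$ across the junction $|u|=1/\alpha$ (both one-sided derivatives equal $-\alpha$). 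The crux — and the ``surprise'' advertised in the technical overview — is that in each region the factor $1-|u|$ (resp.\ $1+|u|$) divides \emph{both} terms of $g'(r)$: for $|u|<1/\alpha$ one gets $g'(r)=\frac{1}{1-|u|}\big[\tau(1-\alpha|u|)\pm\frac{a-b}{s(r)}(\alpha-1)\big]$, and since $|\tau|\le\gamma$, $|a-b|/s(r)\le\gamma$ (as $s(r)\ge\min(s,s')$), and $1-\alpha|u|\ge0$, the bracket is at most $\gamma\big[(1-\alpha|u|)+(\alpha-1)\big]=\gamma\alpha(1-|u|)$, whence $|g'(r)|\le\gamma\alpha=\eps$; the region $|u|>1/\alpha$ is identical with $1+|u|$ and $\alpha+1$ replacing $1-|u|$ and $\alpha-1$. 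Integrating yields $\mathcal L(x)\le\eps$. I expect the clean organization is to isolate the identity for $g'$ as one lemma (the ``differential equality'') and the region-wise bound together with the integration as a second lemma (the ``privacy-loss bound''), and this is the main obstacle: a naive triangle inequality on the unfactored $g'(r)$ only gives $2\eps$, so extracting the exact constant relies on the precise algebra of the two-piece density (and on the log-interpolation of the scale, which makes $\tau$ the constant derivative of $\ln s(r)$ so that the scale contribution never dominates).

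For the variance, use $\pp(s,\alpha)\overset{\mathrm d}{=}s\cdot\pp(1,\alpha)$, reducing the claim to $\Var(\pp(1,\alpha))=(1+O(1/\alpha))\,2/\alpha^2$ — which also explains the regime $\gamma/\eps\to0$, since the second moment is finite only for $\alpha>2$. The mean is $0$ by symmetry, so $\Var(\pp(1,\alpha))=2\int_0^{1/\alpha}x^2 N_{1,\alpha}(\alpha-1)(1-x)^{\alpha-1}\,dx+2\int_{1/\alpha}^{\infty}x^2 N_{1,\alpha}(\alpha+1)(1-\alpha^{-2})^{\alpha}(1+x)^{-\alpha-1}\,dx$; both integrals are elementary (substitute $u=1-x$, resp.\ $u=1+x$, and expand the square), $N_{1,\alpha}$ is explicit, and combining and expanding in powers of $1/\alpha$ — using $(1-1/\alpha)^{\alpha}=e^{-1}(1+O(1/\alpha))$ — gives leading term $2/\alpha^2$ with multiplicative error $1+O(1/\alpha)$. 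Substituting $\alpha=\eps/\gamma$ and multiplying by $(\SS_q^\gamma(D)/\gamma)^2$ then gives the stated formula.

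For the convergence, fix $a>0$ and set $s=a/\gamma$, $\alpha=\eps/\gamma$; the key observation is that $s/\alpha=a/\eps$ is \emph{constant} as $\gamma\to0$, so the inner region of $\pp(a/\gamma,\eps/\gamma)$ is always $\{|x|<a/\eps\}$. There, $(1-|x|/s)^{\alpha-1}=(1-\tfrac{\gamma|x|}{a})^{\eps/\gamma-1}\to e^{-\eps|x|/a}$ and $N_{s,\alpha}(\alpha-1)\to\eps/(2a)$ (again via $(1-1/\alpha)^{\alpha}\to e^{-1}$), so the density tends pointwise to $\tfrac{\eps}{2a}e^{-\eps|x|/a}$; on the outer region the same limit comes from $(1+\tfrac{\gamma|x|}{a})^{-\eps/\gamma-1}\to e^{-\eps|x|/a}$, $(1-\alpha^{-2})^{\alpha}\to1$, and $N_{s,\alpha}(\alpha+1)\to\eps/(2a)$. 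Hence the densities converge pointwise off the null set $\{|x|=a/\eps\}$ to the $\Lap(a/\eps)$ density, and Scheff\'e's lemma upgrades this to convergence in total variation, a fortiori in distribution. (As a consistency check, taking $a=\gamma$ gives $\pp(1,\eps/\gamma)\xrightarrow{\mathrm d}\Lap(\gamma/\eps)$, of variance $2(\gamma/\eps)^2=2/\alpha^2$, matching the variance computation.)
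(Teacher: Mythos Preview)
Your proof is correct and follows essentially the same approach as the paper: the privacy argument interpolates between the two densities (geometric in the scale, linear in the shift), decomposes the path derivative into shift and scale contributions, and exploits the same algebraic identity---that in each branch of the density the two contributions combine to exactly $\eps$---which the paper packages as \Cref{lem:differential_equality}; the variance and convergence arguments are likewise the same (explicit integration plus $1/\alpha$-expansion, and pointwise density convergence plus Scheff\'e). One nit: your closing parenthetical ``consistency check'' is malformed, since you set $a=\gamma$ while sending $\gamma\to0$, so $a$ is not fixed and the putative limit $\Lap(\gamma/\eps)$ degenerates; but this remark is not part of the proof proper.
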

\begin{proof}
We prove the asymptotic expansion of the variance in \Cref{cor:variance_taylor}. We show that the distribution converges to Laplace in \Cref{lem:convergence_to_laplace}.

	It remains to prove privacy. Let $p_{M(D)}$ denote the density function of $M(D)$. We then have
\begin{align*}
\frac{p_{M(D')}(x)}{p_{M(D)}(x)} = \frac{f_{\SS_q^\gamma(D')/\gamma, \eps/\gamma}(x - q(D'))}{f_{\SS_q^\gamma(D)/\gamma, \eps/\gamma}(x - q(D))} \leq e^\eps \,,
\end{align*}
	where final inequality holds by \Cref{lem:privacy_loss_bound}. This implies $\eps$-differential privacy.
\end{proof}

\subsection*{Proving Privacy}
In this section, we prove \Cref{lem:privacy_loss_bound}, which is the result we used above to prove that our mechanism is differentially private. The technical parts of the proof of the lemma appear in \Cref{lem:differential_equality,lem:derivatives,lem:differentiability}. The casual reader may wish to skip the proof of \Cref{lem:privacy_loss_bound} as well as the other lemmas in this subsection.

\begin{lemma} \label{lem:privacy_loss_bound}
Let $q: \mathcal{D} \rightarrow \R$ for $\mathcal{D}$ being the set of all possible datasets, be a sensitivity $1$ query. Let us have two neighboring datasets $D \sim D'$ and $x \in \R$. Then
\[
\frac{f_{\SS_q^\gamma(D')/\gamma, \eps/\gamma}(x - q(D'))}{f_{\SS_q^\gamma(D)/\gamma, \eps/\gamma}(x - q(D))} \leq e^\eps ,
\]
where $f$ is defined in \Cref{def:our_distribution}.
\end{lemma}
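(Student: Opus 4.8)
The plan is to rephrase the claimed density ratio as the change of the log‑density $L(y,s):=\log f_{s,\alpha}(y)$ (with $\alpha:=\eps/\gamma>1$) along a conveniently chosen curve in the $(y,s)$‑plane, and to control that change by a pointwise identity for the partial derivatives of $L$. Writing $y_0=x-q(D)$, $s_0=\SS_q^\gamma(D)/\gamma$, $y_1=x-q(D')$, $s_1=\SS_q^\gamma(D')/\gamma$, the lemma becomes $L(y_1,s_1)-L(y_0,s_0)\le\eps$. I would first isolate the two properties of smooth sensitivity that this relies on: since $D\sim D'$, we have $|q(D)-q(D')|\le\LS_q(D)\le\SS_q^\gamma(D)$ and, symmetrically, $|q(D)-q(D')|\le\SS_q^\gamma(D')$, so $|y_0-y_1|\le\gamma\min(s_0,s_1)$; and directly from \Cref{def:smooth_sensitivity} together with the triangle inequality for $d$, $e^{-\gamma}\le\SS_q^\gamma(D')/\SS_q^\gamma(D)\le e^{\gamma}$, so $|\log s_1-\log s_0|\le\gamma$.

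Next I would join $(y_0,s_0)$ to $(y_1,s_1)$ by the explicit path $s(t)=s_0e^{ct}$ with $c=\log(s_1/s_0)$ and $y(t)=y_0+(y_1-y_0)\,\frac{\int_0^t s(\tau)\,d\tau}{\int_0^1 s(\tau)\,d\tau}$, $t\in[0,1]$. Then $|\dot s(t)/s(t)|=|c|\le\gamma$, and $|\dot y(t)|=\frac{|y_1-y_0|}{\int_0^1 s}\,s(t)\le\gamma\,s(t)$, because $\int_0^1 s(\tau)\,d\tau\ge\min(s_0,s_1)\ge|y_0-y_1|/\gamma$. Since $f_{s,\alpha}>0$ on all of $\R$ and $f_{s,\alpha}$ is $C^1$ (its two branches in \Cref{def:our_distribution} meet with matching first derivative at $|y|=s/\alpha$ — this is the role of \Cref{lem:differentiability}), the map $t\mapsto L(y(t),s(t))$ is $C^1$, and the fundamental theorem of calculus gives $L(y_1,s_1)-L(y_0,s_0)=\int_0^1\big(L_y\,\dot y+L_s\,\dot s\big)\,dt$.

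The crux is a pointwise bound on that integrand, which is the content of \Cref{lem:differential_equality} (via the derivative formulas of \Cref{lem:derivatives}): for all $y\notin\{0,\pm s/\alpha\}$,
\[
s\,|L_y(y,s)|+s\,|L_s(y,s)|=\alpha .
\]
Concretely, differentiating $\log f_{s,\alpha}$ using $N_{s,\alpha}=C(\alpha)/s$ for an $s$-independent $C(\alpha)$ (so $\partial_s\log N_{s,\alpha}=-1/s$), on the inner branch one gets $sL_y=-\tfrac{(\alpha-1)s}{s-|y|}$ and $sL_s=-\alpha+\tfrac{(\alpha-1)s}{s-|y|}\in[-1,0]$, whose absolute values sum to $\alpha$; on the outer branch $sL_y=-\tfrac{(\alpha+1)s}{s+|y|}$ and $sL_s=\alpha-\tfrac{(\alpha+1)s}{s+|y|}\in[0,\alpha)$, again summing to $\alpha$. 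This is exactly the formal version of ``the scaling contributes nothing to the worst‑case privacy loss'': the worst‑case infinitesimal loss equals $\gamma\alpha$ irrespective of how it is split between a shift and a scaling. Combining with $|\dot y|\le\gamma s$, $|\dot s/s|\le\gamma$ and the triangle inequality,
\[
L_y\,\dot y+L_s\,\dot s\ \le\ \gamma\big(s|L_y|+s|L_s|\big)\ =\ \gamma\alpha\ =\ \eps ,
\]
hence $L(y_1,s_1)-L(y_0,s_0)\le\int_0^1\eps\,dt=\eps$, which is the claim.

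The main obstacle — and essentially the only place with real computation — is establishing the differential identity: one must differentiate $\log f_{s,\alpha}$ on each branch and verify the cancellations above, and separately check that $f_{s,\alpha}$ really is $C^1$ at the seam $|y|=s/\alpha$ (with only piecewise smoothness, the path integral would instead be split at the finitely many $t$ where the curve crosses the seam, using absolute continuity of $L$ along it). The two smooth‑sensitivity facts, the existence of the connecting path, and the final triangle‑inequality step are all routine.
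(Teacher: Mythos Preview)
Your proof is correct and follows essentially the same strategy as the paper: write the log-density ratio as an integral along a path joining $(y_0,s_0)$ to $(y_1,s_1)$ and bound the integrand pointwise via the identity $s\,|L_y|+s\,|L_s|=\alpha$, which is exactly \Cref{lem:differential_equality} (obtained from the derivative formulas of \Cref{lem:derivatives}). The only difference is bookkeeping: the paper first normalizes with the scaling relation $f_{s,\alpha}(x)=c\,f_{cs,\alpha}(cx)$ so that all derivatives are evaluated at $s=1/\gamma$, takes the path $(s,y)=(e^{\lambda_r\gamma t}/\gamma,\;x'+\lambda_s t)$, and then has to carry the factor $e^{-\lambda_r t\gamma}$ through; your reparametrized $y(t)$ with $\dot y\propto s(t)$ yields $|\dot y|\le\gamma s$ directly and avoids that normalization, which is a mild but genuine simplification of the same argument.
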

\begin{proof}
	Let $x' = \frac{x-q(D)}{\SS_q^\gamma(D)}$ and define $\lambda_r$ such that $e^{\lambda_r \gamma} = \frac{\SS_q^\gamma(D')}{\SS_q^\gamma(D)}$. Let also $\lambda_s = \frac{q(D) - q(D')}{\SS_q^\gamma(D)}$ and $y(t) = e^{- \lambda_r \gamma t}(x' + \lambda_s t)$. Note that for any $s > 0, \alpha > 1$ and $c > 0$, it holds
\begin{align} \label{eq:scaling_invariance}
	f_{s, \alpha}(x) = cf_{cs, \alpha}(cx).
\end{align}
We now give a bound and we explain the individual steps below.
\allowdisplaybreaks
\begin{align}
&\frac{f_{\SS_q^\gamma(D')/\gamma, \eps/\gamma}(x - q(D'))}{f_{\SS_q^\gamma(D)/\gamma, \eps/\gamma}(x - q(D))} \nonumber\\
	&= \frac{\SS_q^\gamma(D) \cdot f_{e^{\lambda_r \gamma}/\gamma, \eps/\gamma}(\frac{x - q(D')}{\SS_q^\gamma(D)})}{\SS_q^\gamma(D) \cdot f_{1/\gamma, \eps/\gamma}(\frac{x - q(D)}{\SS_q^\gamma(D)})} \label{eq:dividing_out_ss}\\
&= \frac{f_{e^{\lambda_r \gamma}/\gamma, \eps/\gamma}(x' + \lambda_s)}{f_{1/\gamma, \eps/\gamma}(x')} \label{eq:def_of_y_prime}\\
&= \exp\left(\log f_{e^{\lambda_r \gamma}/\gamma, \eps/\gamma}(x' +  \lambda_s) - \log f_{1/\gamma, \eps/\gamma}(x') \right) \nonumber\\
	&= \exp\left(\int_0^1 \Big( \der{t} \log(f_{e^{\lambda_r t \gamma}/\gamma,\eps/\gamma}( x' + \lambda_s t))\, \Big) dt\right) \label{eq:fundamental_theorem_of_calculus}\\
	&= \exp\left(\int_0^1 \Big[ \der{\lambda} \log(f_{e^{\lambda_r (t+\lambda) \gamma}/\gamma,\eps/\gamma}( x' + \lambda_s t)) \quad+ \der{\lambda} \log(f_{e^{\lambda_r t \gamma}/\gamma,\eps/\gamma}( x' + \lambda_s (t+\lambda)))\, \Big]_{\lambda = 0} dt\right) \label{eq:split_the_derivative}\\
	&= \exp\left(\int_0^1 \Big[ \der{\lambda} \log(f_{e^{\lambda_r \lambda \gamma}/\gamma,\eps/\gamma}( y(t))) \quad + \der{\lambda} \log(f_{1/\gamma,\eps/\gamma}( y(t) + e^{-\lambda_r t \gamma} \lambda_s \lambda))\, \Big]_{\lambda = 0} dt\right) \nonumber\\
&= \exp\left(\int_0^1 \Big[ \lambda_r \der{\lambda} \log(f_{e^{\lambda \gamma}/\gamma,\eps/\gamma}( y(t))) \quad+ e^{-\lambda_r t \gamma} \lambda_s \der{\lambda} \log(f_{1/\gamma,\eps/\gamma}( y(t) +  \lambda))\, \Big]_{\lambda = 0} dt\right) \label{eq:pull_out_constants}\\
&\leq \exp\left(\int_0^1 \bigg[\lambda_r \Big|\der{\lambda} \log(f_{e^{\lambda \gamma}/\gamma,\eps/\gamma}( y(t)))\Big| \quad+  e^{-\lambda_r t \gamma} \lambda_s\Big|\der{\lambda} \log(f_{1/\gamma,\eps/\gamma}( y(t) + \lambda))\, \Big| \bigg]_{\lambda = 0} dt\right) \nonumber\\
&\leq \exp\left(\int_0^1 \bigg[\Big|\der{\lambda} \log(f_{e^{\lambda \gamma}/\gamma,\eps/\gamma}( y(t)))\Big| \quad+ \Big|\der{\lambda} \log(f_{1/\gamma,\eps/\gamma}( y(t) + \lambda))\, \Big| \bigg]_{\lambda = 0} dt\right) \label{eq:removing_constants}\\
&= \exp\left(\int_0^1 \eps \,dt\right) = e^\eps \label{eq:bounding_the_derivatives}
\end{align}
We now justify the manipulations that may not be clear.
	Equality \eqref{eq:dividing_out_ss} holds by \cref{eq:scaling_invariance}. Equality \eqref{eq:def_of_y_prime} holds by the definition of $x'$ and $\lambda_s$. The equality \eqref{eq:fundamental_theorem_of_calculus} holds because of the fundamental theorem of calculus; we use here the fact that by \Cref{lem:differentiability}, the function
 \[
 \ell(y,z) = \log(f_{e^{\lambda_r y \gamma}/\gamma,\eps/\gamma}( x' + \lambda_s z))
 \] 
	is differentiable except at points where $x' + \lambda_s z = 0$, at which points it is continuous; furthermore, the function $\phi(t) = \ell(t, t)$ is continuous everywhere and differentiable everywhere except for at most one point where $x' + \lambda_s t = 0$. The equality \eqref{eq:split_the_derivative} holds by \eqref{fact:splitting_derivatives}; we again used that $h$ is differentiable everywhere except for the at most one $t$ satisfying $x' + \lambda_s t = 0$. The equality \eqref{eq:pull_out_constants} holds by viewing each summand as a univariate function in $\lambda$ and applying \Cref{fact:differentiation_scaling}.
The bound \eqref{eq:removing_constants} holds because we argue below that the removed constants are $\lambda_r,e^{-\lambda_r t \gamma}\gamma_s \leq 1$.
The equality \eqref{eq:bounding_the_derivatives} holds by the \Cref{lem:differential_equality} below.


It remains to prove $\lambda_r,e^{-\lambda_r t \gamma}\gamma_s \leq 1$. The definition of smooth sensitivity ensures that $\frac{\SS_q^\gamma(D')}{\SS_q^\gamma(D)} \leq e^\gamma$, and thus it holds that $\lambda_r \leq 1$. It also holds
\[
q(D) - q(D') \leq \min(\SS_q^\gamma(D),\SS_q^\gamma(D'))
\]
and thus
\[
\lambda_s \leq \min\left(1,\;\frac{\SS_q^\gamma(D)}{\SS_q^\gamma(D')}\right).
\]
It thus holds $e^{-\lambda_r t \gamma}\lambda_s \leq 1$ as we needed to prove.
\end{proof}

In the following lemma, we explicitly calculate the derivatives of $\log f_{s,\alpha}(x)$ with respect to $s$ and $x$, which is useful for \cref{lem:differentiability,lem:differential_equality}.

\begin{lemma} \label{lem:derivatives}
	For $\alpha > 1$ and $h : \R_{>0} \times \R \to \R$ defined as $h(s, x) =
	\log f_{s,\alpha}(x)$, it holds for any $x \ne 0$:
	\begin{align*}
		\left[\der{\lambda} h(s, x + \lambda)\right]_{\lambda=0} &= \sgn(x)\cdot
\begin{dcases}
	\frac{\alpha - 1}{|x|-s}
	& \text{if } \frac{|x|}{s} < \alpha^{-1}, \\
	\frac{-\alpha - 1}{|x|+s}
	& \text{otherwise},
\end{dcases}
\\
		\left[\der{\lambda} h(e^{\lambda/s} s, x)\right]_{\lambda=0} &=
\begin{dcases}
	-\frac1s + \frac{(\alpha - 1)|x|}{s(s - |x|)}
	& \text{if } \frac{|x|}{s} < \alpha^{-1}, \\
	-\frac1s + \frac{(\alpha+1)|x|}{s(s + |x|)}
	& \text{otherwise}.
\end{dcases}
	\end{align*}
\end{lemma}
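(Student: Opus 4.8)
The plan is to prove \Cref{lem:derivatives} by direct differentiation of the explicit formula for $\log f_{s,\alpha}$ given in \Cref{def:our_distribution}, treating the two branches ($|x|/s < \alpha^{-1}$ and $|x|/s > \alpha^{-1}$) separately. Fix $\alpha > 1$ throughout. On the inner branch we have $\log f_{s,\alpha}(x) = \log\!\big(N_{s,\alpha}(\alpha-1)\big) + (\alpha-1)\log\!\big(1 - |x|/s\big)$, and on the outer branch $\log f_{s,\alpha}(x) = \log\!\big(N_{s,\alpha}(\alpha+1)(1-\alpha^{-2})^\alpha\big) + (-\alpha-1)\log\!\big(1 + |x|/s\big)$. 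The key observation is that for a fixed $x \neq 0$, the factor $N_{s,\alpha}$ depends on $s$ only through the multiplicative $1/s$ (the rest of $N_{s,\alpha}$ depends on $\alpha$ alone), so $\log N_{s,\alpha} = -\log s + (\text{const in }s)$, which contributes the $-1/s$ term appearing in the second displayed formula and contributes nothing to the first (the $x$-derivative).

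For the first formula, I would compute $\der{\lambda}\log f_{s,\alpha}(x+\lambda)\big|_{\lambda=0}$. Since $x \neq 0$, for small $\lambda$ the sign of $x+\lambda$ equals $\sgn(x)$ and $|x+\lambda| = \sgn(x)(x+\lambda)$, so $\der{\lambda}|x+\lambda|\big|_{\lambda=0} = \sgn(x)$. On the inner branch, differentiating $(\alpha-1)\log(1-|x+\lambda|/s)$ gives $(\alpha-1)\cdot\frac{-\sgn(x)/s}{1-|x|/s} = \sgn(x)\cdot\frac{\alpha-1}{|x|-s}$, matching the claim; the outer branch is analogous with $\alpha-1 \mapsto -\alpha-1$ and $1-|x|/s \mapsto 1+|x|/s$, yielding $\sgn(x)\cdot\frac{-\alpha-1}{|x|+s}$. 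For the second formula, I would substitute $s \mapsto e^{\lambda/s}s$ and differentiate at $\lambda = 0$; note $\der{\lambda}e^{\lambda/s}s\big|_{\lambda=0} = 1$, so by the chain rule this is just $\der{\sigma}\log f_{\sigma,\alpha}(x)\big|_{\sigma=s}$. Differentiating the $-\log\sigma$ piece gives $-1/s$; differentiating $(\alpha-1)\log(1-|x|/\sigma)$ with respect to $\sigma$ gives $(\alpha-1)\cdot\frac{|x|/\sigma^2}{1-|x|/\sigma}\big|_{\sigma=s} = \frac{(\alpha-1)|x|}{s(s-|x|)}$ on the inner branch, and the analogous computation on the outer branch gives $\frac{(\alpha+1)|x|}{s(s+|x|)}$ (the sign flip in the exponent cancels with the sign flip from $1+|x|/\sigma$ differentiating to $-|x|/\sigma^2$).

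I do not anticipate a genuine obstacle here — this is a routine but slightly delicate calculus exercise. The one point requiring care is that the formulas are asserted for $x \neq 0$ but without restricting whether $x$ lies strictly inside one branch or exactly on the boundary $|x|/s = \alpha^{-1}$; I would note that at the boundary the density is continuous but these one-sided derivatives may differ, which is why the statement is naturally read as: the displayed value holds whenever $x$ lies strictly in the stated branch, and is exactly what is needed in \Cref{lem:differentiability,lem:differential_equality} where the boundary case is handled separately (or absorbed into the ``at most one point'' exceptions). The only other thing to double-check is that the perturbations $x + \lambda$ and $e^{\lambda/s}s$ keep us within the same branch for $\lambda$ near $0$ when $x$ is strictly interior, which holds by continuity since the branch condition $|x|/s < \alpha^{-1}$ (resp.\ $>$) is an open condition.
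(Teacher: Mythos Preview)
Your approach is essentially the same as the paper's: direct differentiation of $\log f_{s,\alpha}$ branch by branch, using that $\log N_{s,\alpha} = -\log s + \text{const}$ and that $|x+\lambda|$ is differentiable for $x\neq 0$. The paper packages both branches into a single function $g(s,x,m)=N_{s,\alpha}(\alpha-m)(1-mx/s)^{m\alpha-1}$ with $m=\pm1$, but this is purely cosmetic; your explicit case split does the same work.

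The one point where your proposal diverges from the paper is the boundary $|x|/s=\alpha^{-1}$. You suggest that the one-sided derivatives ``may differ'' there and that the statement should be read as holding only strictly inside each branch. This is incorrect: the paper's statement is for \emph{all} $x\neq 0$, and its proof explicitly checks that the two branch formulas agree at $|x|=s/\alpha$ (for instance, both formulas in the first display evaluate to $-\sgn(x)\,\alpha/s$, and both in the second display evaluate to $0$), so the two-sided derivative exists there and is given by either formula. This matters downstream: \Cref{lem:differentiability} asserts that $\ell$ is differentiable everywhere except where $x'+\lambda_s z=0$, with no exception carved out for the branch boundary, and its proof invokes \Cref{lem:derivatives} precisely to get continuous partial derivatives for all $x\neq 0$. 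So you should add this short boundary verification rather than defer it.
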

\begin{proof}
	The statement follows from direct computation. Full proof is deferred to \cref{sec:deferred_proofs}.
\end{proof}
We proceed with a technical lemma that justifies the invocation of the fundamental theorem of calculus in the proof of \cref{lem:privacy_loss_bound}.

\begin{lemma} \label{lem:differentiability}
Let $0 < \gamma < \eps$; $\lambda_r, \lambda_s, x' \in \R$.
The function
\[
\ell(y,z) = \log(f_{e^{\lambda_r y \gamma}/\gamma,\eps/\gamma}( x' + \lambda_s z))
\]
	is continuous everywhere. It is differentiable everywhere, except when $x' + \lambda_s z = 0$.
\end{lemma}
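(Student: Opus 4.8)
The plan is to factor $\ell$ through the one-variable reparametrizations $s(y) = e^{\lambda_r y\gamma}/\gamma$ and $u(z) = x' + \lambda_s z$, writing $\ell(y,z) = h(s(y),u(z))$ with $h(s,x) = \log f_{s,\alpha}(x)$ and $\alpha = \eps/\gamma > 1$ fixed. Since $s$ and $u$ are real-analytic and $s(y) > 0$ for all $y \in \R$, the chain rule reduces everything to the regularity of $h$ on $\R_{>0}\times\R$: it suffices to show that $h$ is well-defined and continuous on this whole domain and (totally) differentiable at every point with $x \ne 0$. Continuity then transfers to all of $(y,z) \in \R^2$, and differentiability transfers to the set $\{u(z) \ne 0\} = \{x' + \lambda_s z \ne 0\}$, exactly as claimed. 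No point is excluded spuriously: if $\lambda_s \ne 0$, then along the line $x' + \lambda_s z = 0$ the map $z \mapsto \ell(y,z)$ inherits the corner that $h$ has at $x = 0$, so $\ell$ genuinely fails to be differentiable there.

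First I would check well-definedness: on the branch $|x|/s < \alpha^{-1}$ we have $1 - |x|/s > 1 - \alpha^{-1} > 0$ and $\alpha - 1 > 0$, while on the branch $|x|/s \ge \alpha^{-1}$ we have $1 + |x|/s > 0$ with all constants positive, so $f_{s,\alpha} > 0$ everywhere and $h = \log f_{s,\alpha}$ makes sense. For continuity, the only nonobvious points are $x = 0$ and the switching curve $|x| = s/\alpha$. At $x = 0$ we are in the first branch and $f_{s,\alpha}(x) \to N_{s,\alpha}(\alpha-1) = f_{s,\alpha}(0)$. On $|x| = s/\alpha$ the two branches agree, since $(\alpha+1)(1-\alpha^{-2})^\alpha(1+\alpha^{-1})^{-\alpha-1} = (\alpha-1)(1-\alpha^{-1})^{\alpha-1}$ — the same algebraic identity used to verify that the density integrates to $1$ — and continuity of $h$ follows because $f_{s,\alpha} > 0$.

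For differentiability at a point with $x \ne 0$ I would rely on \cref{lem:derivatives}, which supplies the $x$-partial derivative $[\der{\lambda}h(s,x+\lambda)]_{\lambda=0}$ directly and the $s$-partial derivative via $[\der{\lambda}h(e^{\lambda/s}s,x)]_{\lambda=0}$ (equal to $\partial_s h$ because $\lambda \mapsto e^{\lambda/s}s$ is a diffeomorphism with unit derivative at $0$); thus $\partial_x h$ and $\partial_s h$ exist on the whole set $\{x \ne 0\}$. These partials are continuous there: within each of the two open branches the stated formulas are manifestly continuous, and on the common boundary $|x| = s/\alpha$ the two branch-formulas agree — both $x$-derivatives equal $-\sgn(x)\,\alpha/s$ and both $s$-derivatives equal $0$ — so nothing discontinuous happens when crossing it. \cref{fact:total_derivatives} then gives that $h$ is totally differentiable at every point with $x \ne 0$, and composing with the smooth $s(y), u(z)$ via the chain rule transfers this to $\ell$; at $x = 0$ the one-sided $x$-derivatives of $h$ are $\mp(\alpha-1)/s \ne 0$, which is why that line must be excluded.

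I expect the switching-curve step to be the only real obstacle: one has to argue carefully that matching, continuously-extending one-sided partial derivatives really produce a $C^1$ function there — not merely a continuous one — before invoking \cref{fact:total_derivatives}. If one prefers not to lean on \cref{lem:derivatives} for the existence of the partials on the curve itself, the standard remedy is a one-variable mean-value-theorem argument applied along each coordinate axis: a continuous function whose partials exist off the curve and extend continuously across it has those partials on the curve as well. Everything else is routine composition and chain-rule bookkeeping.
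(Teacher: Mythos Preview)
Your proposal is correct and follows essentially the same route as the paper: factor $\ell$ through $h(s,x)=\log f_{s,\alpha}(x)$ via the smooth reparametrizations $s(y),u(z)$, invoke \cref{lem:derivatives} to obtain continuous partials of $h$ on $\{x\ne 0\}$, apply \cref{fact:total_derivatives}, and handle continuity at $x=0$ by observing that only the first branch is active near there. Your write-up is more thorough---you explicitly check positivity of $f_{s,\alpha}$, verify matching of the branches and their partials on the switching curve $|x|=s/\alpha$, and note that the exclusion at $x'+\lambda_s z=0$ is genuinely needed when $\lambda_s\ne 0$---but these are elaborations of the same argument rather than a different one.
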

\begin{proof}
	Set $\alpha = \eps / \gamma$ and take $h$ from \cref{lem:derivatives}. Define $y' = e^{\lambda_r y\gamma}/\gamma$ and $z' = x' + \lambda_s z$. Then $\ell(y, z) = h(y', z')$ and we have
	\begin{align*}
		\der{y} \ell(y, z) = \left[\ell(y + \lambda, z)\right]_{\lambda=0} &= \left[h(y' \cdot e^{\lambda_r \lambda \gamma}, z')\right]_{\lambda=0},
		\\
		\der{z} \ell(y, z) = \left[\ell(y, z + \lambda)\right]_{\lambda=0} &= \left[h(y', z' + \lambda_s\lambda)\right]_{\lambda=0}.
	\end{align*}
	Now we can apply \cref{lem:derivatives} on the right-hand sides, together
	with the chain rule, to obtain that the partial derivatives on the left exist and
	are continuous for all $(y, z)$ such that $z' = x' + \lambda_s z \ne 0$. By
	\cref{fact:total_derivatives} and the fact that around every $(y, z)$,
	there is a sufficiently small neighborhood where $x' + \lambda_s z \ne 0$,
	we get that $\ell$ is differentiable (and thus continuous) in every $(y,
	z)$ with $x' + \lambda_s z \ne 0$.

	It remains to prove that $\ell$ is also continuous for $x' + \lambda_s z =
	0$. It suffices to show that $h(s, x)$ is continuous at $(s, 0)$ for
	all choices of $s > 0$. The first branch $h_1$ of $h$ is continuous
	everywhere, and for any $(s, 0)$, there exists a sufficiently small
	$\eps$-neighborhood of $(s, 0)$ where $|x| / s < \alpha^{-1}$, that is,
	where $h(s, x) = h_1(s, x)$. Therefore, $h$ is continuous at $(s, 0)$.
\end{proof}

\begin{lemma} \label{lem:differential_equality}
	Let $x \in \R \setminus \{0\}$ and let $0< \gamma < \eps$. It holds
	\begin{align}
		\label{eq:differential_equality}
		&\Bigg[\bigg|\der{\lambda} \log(f_{1/\gamma, \eps/\gamma}(x + \lambda))\Bigg|
		+ \Bigg|\der{\lambda} \log(f_{e^{\lambda \gamma}/\gamma, \eps/\gamma}(x)))\bigg|  \Bigg]_{\lambda = 0}  = \eps.
	\end{align}
\end{lemma}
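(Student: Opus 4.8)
The plan is to obtain the statement by a direct substitution into \cref{lem:derivatives}. Set $\alpha = \eps/\gamma$ and $s = 1/\gamma$; since $\gamma < \eps$ we have $\alpha > 1$, so \cref{lem:derivatives} applies. The observation that lets us use its second formula is that with $s = 1/\gamma$ one has $1/s = \gamma$, hence $e^{\lambda\gamma}/\gamma = e^{\lambda/s}\,s$, so replacing $s$ by $e^{\lambda\gamma}/\gamma$ in $\log f_{s,\alpha}(x)$ is exactly the substitution $s \mapsto e^{\lambda/s}s$ treated in the lemma. Finally, note that the branch condition $|x|/s < \alpha^{-1}$ is equivalent to $|x| < 1/\eps$, and that $1/\eps < 1/\gamma = s$; this is what will pin down the signs of the denominators that appear.

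Next I would split into the two cases $|x| < 1/\eps$ and $|x| \ge 1/\eps$. In the first case, substituting $\alpha = \eps/\gamma$, $s = 1/\gamma$ into the first branches of both formulas of \cref{lem:derivatives} and simplifying, the shift derivative equals $\sgn(x)\cdot\frac{\eps-\gamma}{\gamma|x|-1}$ and the scaling derivative equals $\gamma\cdot\frac{\eps|x|-1}{1-\gamma|x|}$; since $\gamma|x|-1 < 0$, $\eps - \gamma > 0$ and $\eps|x| - 1 < 0$, both are negative and their absolute values are $\frac{\eps-\gamma}{1-\gamma|x|}$ and $\gamma\cdot\frac{1-\eps|x|}{1-\gamma|x|}$; adding them over the common denominator $1 - \gamma|x|$ gives numerator $(\eps-\gamma) + \gamma - \gamma\eps|x| = \eps(1-\gamma|x|)$, which cancels to $\eps$. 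In the second case I would use the ``otherwise'' branches: the shift derivative equals $\sgn(x)\cdot\frac{-(\eps+\gamma)}{1+\gamma|x|}$ and the scaling derivative equals $\gamma\cdot\frac{\eps|x|-1}{1+\gamma|x|}$, the latter now nonnegative since $\eps|x| \ge 1$; summing the absolute values $\frac{\eps+\gamma}{1+\gamma|x|}$ and $\gamma\cdot\frac{\eps|x|-1}{1+\gamma|x|}$ produces numerator $(\eps+\gamma) + \gamma\eps|x| - \gamma = \eps(1+\gamma|x|)$, which again cancels to $\eps$.

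I do not expect a conceptual obstacle here: the work is entirely the sign bookkeeping and the algebraic simplification above, the only delicate point being to check that each denominator that appears keeps a fixed sign under the case hypothesis, so that the absolute values resolve without a further split. One small matter is the junction $|x| = 1/\eps$: it falls under the ``otherwise'' branch of \cref{lem:derivatives}, hence is covered by the second case, and consistently the two branch formulas of \cref{lem:derivatives} agree there because $f_{s,\alpha}$ is in fact $C^1$ at $|x| = s/\alpha$, so the derivatives exist and the claimed identity holds at that point as well.
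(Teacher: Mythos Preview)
Your proposal is correct and follows essentially the same route as the paper: invoke \cref{lem:derivatives}, split on whether $|x|/s < \alpha^{-1}$, resolve the absolute values via the sign analysis of the numerators and denominators, and sum to $\alpha/s=\eps$. The only cosmetic difference is that you substitute $s=1/\gamma$, $\alpha=\eps/\gamma$ at the outset and compute in terms of $\gamma,\eps$, whereas the paper keeps $s,\alpha$ general and substitutes at the end; one minor phrasing slip is ``both are negative'' for the shift derivative (it carries a $\sgn(x)$ factor), but your absolute-value computation is unaffected.
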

\begin{proof}[Proof sketch]
	For the full proof, see \cref{sec:deferred_proofs}. The main idea of the proof is to calculate both expressions in the sum using \cref{lem:derivatives}, with $s = 1/\gamma$ and $\alpha = \eps/\gamma$. Then, we distinguish two cases: either $|x| / s < \alpha^{-1}$, or $|x| / s \ge \alpha^{-1}$. For both cases separately, and for each absolute value, we can argue that the expression inside this absolute value will always have the same sign, and can be therefore written without the absolute value. This simplifies the analysis, and we can then verify that the left-hand side of \cref{eq:differential_equality} always sums up to $\alpha / s$. Finally, we substitute $s$ and $\alpha$ back to get $\alpha / s = \eps$.
\end{proof}

\subsection{Properties of the $\pp(s, \alpha)$ Distribution}
We now give a lemma that among others states the variance of $\pp$. To make sure we have not made a mistake in the very technical manipulations, we have checked that we get the same result if we approximate the variance by numerical integration.\footnote{Specifically, by scaling, we can assume w.l.o.g.\ that $s = 1$. For this value of $s$, we have plotted the variance as calculated by numeric integration and the expression above. We checked that they are indeed the same.} We defer the proof to \cref{sec:deferred_proofs}.
\begin{lemma} \label{lem:variance}
$\pp$ has mean $0$ and variance
{\small
\begin{align*}
\sigma^2_{s,\alpha} = 2 s^2 \frac{\left(\frac{1}{\alpha }+1\right)^{-\alpha } \left(\left(19 \alpha ^2+5\right) \left(1-\frac{1}{\alpha ^2}\right)^{\alpha }+(\alpha -2) (\alpha -1)^2 \left(\frac{1}{\alpha }+1\right)^{\alpha }\right)}{\left(2 \left(\frac{\alpha -1}{\alpha }\right)^{\alpha }+\alpha -1\right) \left(\alpha ^4-5 \alpha ^2+4\right)}
	.
\end{align*}
}
\end{lemma}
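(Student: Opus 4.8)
\textbf{Proof proposal for \Cref{lem:variance}.}

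The plan is to compute the two raw moments of $\pp(s,\alpha)$ directly from the piecewise density in \Cref{def:our_distribution}, exploiting two structural features: symmetry and scale invariance. Symmetry of $f_{s,\alpha}$ about $0$ gives the mean $0$ immediately and lets us write $\sigma^2_{s,\alpha} = 2\int_0^\infty x^2 f_{s,\alpha}(x)\,dx$. By the scaling identity $f_{s,\alpha}(x) = \tfrac{1}{s} f_{1,\alpha}(x/s)$ (a special case of \eqref{eq:scaling_invariance}), the substitution $x \mapsto sx$ shows $\sigma^2_{s,\alpha} = s^2 \sigma^2_{1,\alpha}$, so it suffices to treat $s = 1$. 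Then $\sigma^2_{1,\alpha}$ splits as the sum of an integral over the ``inner'' region $[0,\alpha^{-1})$ of the polynomial branch and an integral over the ``outer'' region $[\alpha^{-1},\infty)$ of the power-law branch.

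The inner piece is $2 N_{1,\alpha}(\alpha-1)\int_0^{1/\alpha} x^2 (1-x)^{\alpha-1}\,dx$; substituting $u = 1-x$ turns this into a difference of incomplete Beta-type integrals $\int_{1-1/\alpha}^1 (1-u)^2 u^{\alpha-1}\,du = \int_{1-1/\alpha}^1 (u^{\alpha-1} - 2u^\alpha + u^{\alpha+1})\,du$, which is elementary and evaluates to a closed form in $\alpha$ and $(1-1/\alpha)^{\alpha}, (1-1/\alpha)^{\alpha+1}, (1-1/\alpha)^{\alpha+2}$ — all of which collapse to powers of $(1-\tfrac{1}{\alpha})$ times rational functions. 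The outer piece is $2 N_{1,\alpha}(\alpha+1)(1-\alpha^{-2})^{\alpha}\int_{1/\alpha}^\infty x^2(1+x)^{-\alpha-1}\,dx$; here the substitution $u = 1+x$ gives $\int_{1+1/\alpha}^\infty (u-1)^2 u^{-\alpha-1}\,du = \int_{1+1/\alpha}^\infty (u^{1-\alpha} - 2u^{-\alpha} + u^{-\alpha-1})\,du$, which converges precisely because $\alpha > 1$ forces $1-\alpha < 0$ (indeed $\alpha > 2$ is needed for the $u^{1-\alpha}$ term, matching the $\alpha^4 - 5\alpha^2 + 4 = (\alpha^2-1)(\alpha^2-4)$ denominator, which vanishes at $\alpha = 2$; for $1 < \alpha \le 2$ the stated formula is the meromorphic continuation and the actual variance is handled as a limiting/separate case or simply noted to be infinite). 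Each antiderivative is again elementary, producing terms $\big(1+\tfrac{1}{\alpha}\big)^{2-\alpha}, \big(1+\tfrac1\alpha\big)^{1-\alpha}, \big(1+\tfrac1\alpha\big)^{-\alpha}$ with rational coefficients.

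Adding the two pieces, substituting the explicit normalizer $N_{1,\alpha} = \tfrac{\alpha}{2(2((\alpha-1)/\alpha)^{\alpha} + \alpha - 1)}$, and collecting terms over the common denominator $\big(2((\alpha-1)/\alpha)^\alpha + \alpha - 1\big)(\alpha^4 - 5\alpha^2 + 4)$ should reproduce the claimed expression after routine algebraic simplification (in particular, rewriting $(\alpha-1)/\alpha = 1 - 1/\alpha$ and factoring $(1+1/\alpha)^{-\alpha}$ out of the numerator as in the statement). The main obstacle is purely the bookkeeping: matching the exact rational coefficients $19\alpha^2 + 5$ and $(\alpha-2)(\alpha-1)^2$ in the numerator requires carefully tracking which fractional powers combine — e.g. $(1-\alpha^{-2})^\alpha = (1-1/\alpha)^\alpha(1+1/\alpha)^\alpha$, so the outer-region factor $(1-\alpha^{-2})^\alpha$ partially cancels against $(1+1/\alpha)^{-\alpha}$ — and then verifying that the leftover polynomial-in-$\alpha$ coefficients agree. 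Since the algebra is error-prone, I would cross-check the final formula at a few numeric values of $\alpha$ against direct numerical integration of $f_{1,\alpha}$, exactly as the authors indicate they did; I would also sanity-check the $\alpha \to \infty$ limit against the known variance $2s^2$ of $\Lap(s/\alpha \cdot \alpha) $—more precisely, consistency with the convergence $\pp(a/\gamma, \eps/\gamma) \to \Lap(a/\eps)$ from \Cref{thm:main_theorem}, which forces $\sigma^2_{s,\alpha} \to 2(s/\alpha)^2 \cdot$ (appropriate rescaling) as $\alpha \to \infty$ with $s/\alpha$ fixed.
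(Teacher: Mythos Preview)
Your proposal is correct and follows essentially the same route as the paper: reduce to $s=1$ by scaling, use symmetry for the mean, split $\int_0^\infty x^2 f_{1,\alpha}(x)\,dx$ at $x=\alpha^{-1}$ into the two branches, compute elementary antiderivatives, and evaluate at the endpoints. The paper simply writes down the antiderivatives $I_1^{\sigma^2}$ and $I_2^{\sigma^2}$ directly (noting they can be checked by differentiation) rather than deriving them via your substitutions $u=1\mp x$, but this is only a cosmetic difference.
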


\begin{corollary}  \label{cor:variance_taylor}
Assume $\gamma < \eps/2$. It holds
\[
	\Var(M(D)) = (1+O(\gamma/\eps)) \, \frac{2 \SS_q^\gamma(D)^2}{\eps^2} .
\]
\end{corollary}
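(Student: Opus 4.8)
The goal is to extract the leading-order behaviour of the exact variance formula in \Cref{lem:variance} under the substitution $s = \SS_q^\gamma(D)/\gamma$ and $\alpha = \eps/\gamma$, in the regime $\gamma/\eps \to 0$, i.e.\ $\alpha \to \infty$. Writing $\sigma^2_{s,\alpha} = 2 s^2 \, g(\alpha)$ where $g(\alpha)$ is the dimensionless ratio appearing in \Cref{lem:variance}, the claim $\Var(M(D)) = (1 + O(\gamma/\eps))\, 2\SS_q^\gamma(D)^2/\eps^2$ is equivalent, after plugging in $s^2 = \SS_q^\gamma(D)^2/\gamma^2$ and $\alpha = \eps/\gamma$, to the single asymptotic statement
\[
	g(\alpha) = \frac{1}{\alpha^2}\left(1 + O(1/\alpha)\right) \qquad \text{as } \alpha \to \infty .
\]
So the whole corollary reduces to a one-variable asymptotic expansion of $g(\alpha)$.

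The plan is to expand each of the four $\alpha$-dependent powers that occur in $g$ using $\left(1 + \frac{c}{\alpha}\right)^{\pm\alpha} = e^{\pm c}\left(1 + O(1/\alpha)\right)$ and, more precisely, $\left(1 \pm \frac1\alpha\right)^{\alpha} = e^{\pm 1}\left(1 \mp \frac{1}{2\alpha} + O(1/\alpha^2)\right)$, together with $\left(1 - \frac{1}{\alpha^2}\right)^{\alpha} = 1 + O(1/\alpha)$. Substituting these into the numerator, the dominant term is $\left(\frac1\alpha+1\right)^{-\alpha}\cdot 19\alpha^2\left(1-\frac1{\alpha^2}\right)^{\alpha} \sim 19 e^{-1} \alpha^2$; the term $(\alpha-2)(\alpha-1)^2\left(\frac1\alpha+1\right)^{\alpha}$ contributes $\sim e\,\alpha^3$, which after multiplication by the overall $\left(\frac1\alpha+1\right)^{-\alpha} \sim e^{-1}$ gives $\sim \alpha^3$, so the numerator is $\sim \alpha^3$. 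In the denominator, $2\left(\frac{\alpha-1}{\alpha}\right)^{\alpha} + \alpha - 1 \sim \alpha$ and $\alpha^4 - 5\alpha^2 + 4 \sim \alpha^4$, so the denominator is $\sim \alpha^5$. Hence $g(\alpha) \sim \alpha^3/\alpha^5 = 1/\alpha^2$, which is the leading term we want. To get the error bound $O(1/\alpha)$ on the relative error, I would carry the expansions one order further (keeping the $1/\alpha$ corrections in each factor) and verify that the next-order terms contribute a relative correction of size $\Theta(1/\alpha)$ and no larger; since every factor is a ratio of smooth functions bounded away from $0$ and $\infty$ after normalisation by the appropriate power of $\alpha$, the combined relative error is controlled by the sum of the individual relative errors, each $O(1/\alpha)$.

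A cleaner way to organise the bookkeeping, which I would actually use, is to factor out the leading powers of $\alpha$ and of $e$ explicitly before expanding: write the numerator as $\alpha^3 \cdot e^{-1} \cdot e \cdot \big(1 + O(1/\alpha)\big) = \alpha^3\big(1 + O(1/\alpha)\big)$ and the denominator as $\alpha^5\big(1 + O(1/\alpha)\big)$, each $1 + O(1/\alpha)$ factor being a finite product/sum of terms of the form $\left(1 + \frac{c}{\alpha}\right)^{\pm\alpha}\!/e^{\pm c}$ and rational functions $1 + O(1/\alpha)$. Then $g(\alpha) = \alpha^{-2}\big(1 + O(1/\alpha)\big)$ follows from the fact that a ratio of two quantities, each of the form $1 + O(1/\alpha)$ with the denominator bounded away from $0$, is again $1 + O(1/\alpha)$. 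Finally, substitute back: $\Var(M(D)) = 2s^2 g(\alpha) = 2\,\frac{\SS_q^\gamma(D)^2}{\gamma^2}\cdot\frac{\gamma^2}{\eps^2}\big(1 + O(\gamma/\eps)\big) = \big(1 + O(\gamma/\eps)\big)\frac{2\SS_q^\gamma(D)^2}{\eps^2}$.

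The main obstacle is purely the algebraic bookkeeping: the numerator of $g$ is a sum of two terms whose leading powers of $\alpha$ must be tracked carefully (the first term looks like it scales as $\alpha^2$ but is multiplied by the outer $\left(\frac1\alpha+1\right)^{-\alpha}$, while the second scales as $\alpha^3$ and dominates), and one must make sure that the cancellation of the $e^{\pm1}$ factors is accounted for correctly and that no larger-than-claimed error term sneaks in from, e.g., the $\left(1-\frac1{\alpha^2}\right)^{\alpha}$ factor or from the $-5\alpha^2+4$ correction in the denominator. There is no conceptual difficulty — the hypothesis $\gamma < \eps/2$ (i.e.\ $\alpha > 2$) is there only to keep the formula of \Cref{lem:variance} well-defined, since $\alpha^4 - 5\alpha^2 + 4 = (\alpha^2-1)(\alpha^2-4)$ vanishes at $\alpha = 2$ — so the only real work is to do the Taylor expansion of the four exponential-type factors to second order and collect terms, which I would relegate to \Cref{sec:deferred_proofs} as the paper does for its other computational lemmas.
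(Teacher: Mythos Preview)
Your proposal is correct and takes essentially the same approach as the paper: substitute $s = \SS_q^\gamma(D)/\gamma$, $\alpha = \eps/\gamma$ into the exact variance formula of \Cref{lem:variance} and carry out a first-order asymptotic (Taylor) expansion as $\gamma/\eps \to 0$. The paper's own proof is terse (``we skip routine steps in the computation''), whereas you spell out the mechanism of the expansion --- factoring out leading powers of $\alpha$, using $(1+c/\alpha)^{\pm\alpha} = e^{\pm c}(1+O(1/\alpha))$, noting that the $(\alpha-2)(\alpha-1)^2(1+1/\alpha)^{\alpha}$ term dominates the numerator after the outer $(1+1/\alpha)^{-\alpha}$ cancels it to a polynomial, and observing that $\alpha>2$ is needed only to avoid the zero of $\alpha^4-5\alpha^2+4$ --- but the argument is the same in substance.
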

\begin{proof}
	We substitute into the above lemma with $s = \SS_q^\gamma(D)/\gamma$, $\alpha = \eps/\gamma$, and take the Taylor expansion of order one w.r.t.\ $\gamma$ around $\gamma = 0$. We skip routine steps in the computation and just state that the second-order Taylor expansion is
\[
	\frac{2 {\SS_q^\gamma(D)}^2}{\eps^2}-\frac{2 (3 e-17) \gamma {\SS_q^\gamma(D)}^2}{e \eps^3}+O\left(\gamma^2\right).
\]
This is exactly as we claim.
\end{proof}

Note that the variance of the Laplace distribution scaled to the smooth sensitivity is $\frac{2}{\eps^2}$, meaning that for $\gamma/\eps \rightarrow 0$, the variance of our mechanism approaches that of the Laplace mechanism scaled to the smooth sensitivity.
In fact, the distribution converges to the Laplace distribution for $\gamma \rightarrow 0$.
\begin{lemma} \label{lem:convergence_to_laplace}
For $\gamma \rightarrow 0$ and any $a,\eps > 0$, it holds that $f_{a/\gamma, \eps/\gamma}$ converges pointwise to the probability density function (pdf) of $\Lap(a/\eps)$.
\end{lemma}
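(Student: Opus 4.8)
The plan is to prove pointwise convergence directly: fix an arbitrary $x \in \R$, write $s = a/\gamma$ and $\alpha = \eps/\gamma$, and show that $f_{s,\alpha}(x) \to \frac{\eps}{2a}\,e^{-\eps|x|/a}$ as $\gamma \to 0$; the right-hand side is precisely the density of $\Lap(a/\eps)$. The first step is the key structural observation that, once $x$ is fixed, the branch of \cref{def:our_distribution} that governs $f_{s,\alpha}(x)$ does not depend on $\gamma$: the condition $|x|/s < \alpha^{-1}$ unwinds to $|x|\gamma/a < \gamma/\eps$, i.e.\ to $\eps|x| < a$. Hence for $|x| < a/\eps$ we sit in the first branch for every $\gamma$, for $|x| > a/\eps$ in the second branch for every $\gamma$, and for $|x| = a/\eps$ exactly on the threshold $|x|/s = \alpha^{-1}$ (so in the ``otherwise'' branch) for every $\gamma$. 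It therefore suffices to compute the limit branch by branch.

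Next I would dispatch the factors that are common, or nearly common, to the two branches. Since $((\alpha-1)/\alpha)^\alpha = (1 - 1/\alpha)^\alpha \to e^{-1}$ as $\alpha \to \infty$, the quantity $2((\alpha-1)/\alpha)^\alpha + \alpha - 1$ equals $\alpha\,(1 + o(1))$, so
\[
	N_{s,\alpha}\,(\alpha \pm 1) = \frac{\alpha(\alpha\pm1)}{2s\,\bigl(\alpha(1+o(1))\bigr)} = \frac{\alpha}{2s}\,(1+o(1)),
\]
and since $\alpha/s = \eps/a$ holds identically this tends to $\frac{\eps}{2a}$. For the second branch there is in addition the factor $(1 - 1/\alpha^2)^\alpha$, and since $\alpha\log(1 - 1/\alpha^2) = -1/\alpha + O(\alpha^{-3}) \to 0$, this factor tends to $1$.

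It then remains to handle the power terms, which is where the exponential in the Laplace density emerges. In the first branch, using $|x|/s = |x|\gamma/a$,
\[
	\log\Bigl(1 - \tfrac{|x|}{s}\Bigr)^{\alpha - 1} = \Bigl(\tfrac{\eps}{\gamma} - 1\Bigr)\log\Bigl(1 - \tfrac{|x|\gamma}{a}\Bigr) = \Bigl(\tfrac{\eps}{\gamma} - 1\Bigr)\Bigl(-\tfrac{|x|\gamma}{a} + O(\gamma^2)\Bigr) = -\tfrac{\eps|x|}{a} + O(\gamma) \longrightarrow -\tfrac{\eps|x|}{a},
\]
and symmetrically in the second branch $\log\bigl(1 + |x|/s\bigr)^{-\alpha - 1} = -\bigl(\tfrac{\eps}{\gamma} + 1\bigr)\log\bigl(1 + \tfrac{|x|\gamma}{a}\bigr) \to -\tfrac{\eps|x|}{a}$. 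Multiplying the limits of the (three, resp.\ two) factors in each branch gives $f_{s,\alpha}(x) \to \frac{\eps}{2a}\,e^{-\eps|x|/a}$ in both cases; the two limiting expressions agree at $|x| = a/\eps$, as they must, given the continuity already established in \cref{lem:correctly_defined}. This yields the claimed pointwise convergence.

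I do not anticipate a genuine obstacle: the argument is a chain of elementary limits of the $(1 + c/\alpha)^\alpha \to e^c$ flavour. The one place that needs a moment's thought — that we are taking a pointwise limit through a piecewise-defined family of densities — is neutralised by the first step, which pins down the active branch as a function of $x$ alone. The only real care needed is to keep the $o(1)$ and $O(\gamma)$ bookkeeping honest when the normaliser asymptotics are combined with the power-term asymptotics.
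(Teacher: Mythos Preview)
Your proof is correct and follows essentially the same route as the paper's: fix $x$, handle the normalising constant and the power term separately via standard $(1+c/\alpha)^\alpha \to e^c$ limits, and treat the two branches in turn. Your explicit observation that the branch condition $|x|/s < \alpha^{-1}$ reduces to $|x| < a/\eps$ and is therefore independent of $\gamma$ is a helpful clarification that the paper leaves implicit; the one small slip is attributing continuity of $f_{s,\alpha}$ to \cref{lem:correctly_defined} (that lemma only shows the density integrates to $1$), but this aside is not needed for your argument.
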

\begin{proof}
Let us denote the pdf of $\Lap(a/\eps)$ by $f_{\Lap(a/\eps)}$.
We prove pointwise convergence, that is, that for any $x \in \R$, $f_{a/\gamma, \eps/\gamma}(x) \rightarrow f_{\Lap(a/\eps)}(x)$. This in turn implies the desired convergence in distribution.\footnote{This holds because by Scheffé's lemma, this means that the TV distance of the two distributions goes to zero, which in turn implies convergence in distribution.} Since $f_{\Lap(a/\eps)}$ is bounded, this follows if we prove that for $\gamma/\eps \rightarrow 0$ and any fixed $a>0$ and $x \in \R$, it holds that $f_{a/\gamma, \eps/\gamma}(x) / f_{\Lap(a/\eps)}(x) \rightarrow 1$, denoted $f_{a/\gamma, \eps/\gamma}(y) \sim f_{\Lap(a/\eps)}(y)$. We prove this claim in the rest of this proof. 

Set $s = a / \gamma$, $\alpha = \eps / \gamma$. It holds that
\begin{align*}
	\bar{f}_1 &\eqdef \left(1-\frac{|x|}{s}\right)^{\alpha-1} = \left(1-\frac{|x| \gamma}{a}\right)^{\eps/\gamma-1} \sim \left(1-\frac{|x| \gamma}{a}\right)^{\eps/\gamma} =  \left(\left(1-\frac{|x| \gamma}{a}\right)^{a/(\gamma |x|)}\right)^{\eps |x|/a} \sim e^{- \eps |x| / a},
\end{align*}
where the first $\sim$ uses the fact that the base of the exponential converges to $1$ and the second $\sim$ uses the standard limit $(1-z)^{1/z} \rightarrow e^{-1}$.
At the same time, it holds that
\begin{align*}
	c \eqdef \frac{\alpha }{2 \left(2 \left(\frac{\alpha -1}{\alpha }\right)^{\alpha }+\alpha -1\right) s} &= \frac{\eps/\gamma }{2 \left(2 \left(\frac{\eps/\gamma -1}{\eps/\gamma }\right)^{\eps/\gamma }+\eps/\gamma -1\right) a/\gamma} \\&\sim \frac{\eps/\gamma}{2 \left(2 e^{-1}+\eps/\gamma -1\right) a/\gamma} \\&\sim \frac{\eps/\gamma}{2  \cdot \eps/\gamma \cdot a/\gamma} = \frac{\gamma}{2 a},
\end{align*}
where the first $\sim$ again uses the limit $(1-z)^{1/z} \rightarrow e^{-1}$. Overall, we can thus for $\frac{|x|}{s} < \alpha^{-1}$ write
\begin{align*}
	f_{a/\gamma, \eps/\gamma}(x) &= c \Big(\frac{\eps}{\gamma} - 1\Big) \bar{f}_1 \sim \frac{\gamma}{2 a} \Big(\frac{\eps}{\gamma} - 1\Big) e^{- \eps |x| / a} \sim \frac{\gamma}{2 a} \cdot \frac{\eps}{\gamma} e^{- \eps |x| / a} = \frac{\eps}{2 a} e^{- \eps |x| / a} = f_{\Lap(a/\eps)}(x),
\end{align*}
as we wanted to show.
The argument for $\frac{|x|}{s} \geq \alpha^{-1}$ is very similar:
	\begin{align*}
		\bar{f}_2 &\eqdef \left(1+\frac{|x|}{s}\right)^{-\alpha-1} = \left(1+\frac{|x| \gamma}{a}\right)^{-\eps/\gamma-1}
		\\
		&\sim \left(1+\frac{|x| \gamma}{a}\right)^{-\eps/\gamma}
	\left(\left(1+\frac{|x| \gamma}{a}\right)^{-a/(\gamma |x|)}\right)^{\eps |x|/a}
		\sim e^{- \eps |x| / a}
	\end{align*}
and
\begin{align*}
	f_{a/\gamma, \eps/\gamma}(x)
	&=
	c \, \left(\frac{\eps}{\gamma} + 1\right) \bigg(1-\frac{\gamma^2}{\eps^2}\bigg)^{\eps/\gamma} \bar{f}_2
	\\
	&\sim \frac{\gamma}{2 a} \left(\frac{\eps}{\gamma} + 1\right) \bigg(1-\frac{\gamma^2}{\eps^2}\bigg)^{\eps/\gamma}  e^{- \eps |x| / a}
	\\
	&\sim \frac{\gamma}{2 a} \frac{\eps}{\gamma} e^{- \eps |x| / a}
	\\
	&= \frac{\eps}{2 a} e^{- \eps |x| / a} = f_{\Lap(a/\eps)}(x),
\end{align*}
where the second $\sim$ uses the standard limit $(1-x^2)^{1/x} \rightarrow 1$ for $x \rightarrow 0$.
\end{proof}

\section*{Acknowledgements}
We would like to thank Rasmus Pagh and Adam Sealfon for helpful discussions. We would like to thank Rasmus Pagh for hosting Richard Hladík at the University of Copenhagen.

\bibliographystyle{plainnat}
\bibliography{literature}

\appendix
\onecolumn
\section{Deferred Proofs} \label{sec:deferred_proofs}
We start by proving the calculus facts from \cref{sec:calculus}.

\begin{proof}[Proof of \cref{fact:splitting_derivatives}]
	The statement follows from the chain rule. Its special case states that for $f : \R^2 \to \R$ and differentiable $g : \R \to \R$ and $h : \R \to \R$, and all $x \in \R$ such that $y = g(x)$, $z=h(x)$ and $f$ is differentiable at $(y, z)$, it holds
	\begin{align*}
		&\dder{\,f(y, z)}{x}
		=
		\dder{\,f(y, z)}{y}
		 \cdot 
		\dder yx
		+
		\dder{\,f(y, z)}{z}
		 \cdot
		\dder zx
		=
		\left[
		\dder{\,f(y + \lambda, z)}{\lambda}
		 \cdot 
		\dder yx
		+
		\dder{\,f(y, z + \lambda)}{\lambda}
		 \cdot
		\dder zx
		\right]_{\lambda=0},
	\end{align*}
	where the notation on right-hand side is an equivalent way of writing
	partial derivatives. In our case, $g = h = \text{id}$, $x = y = z$ and $\dder yx
	= \dder zx = 1$, and substituting these facts in the equation above
	concludes the proof.
\end{proof}
\begin{proof}[Proof of \cref{fact:differentiation_scaling}]
	The statement follows from the chain rule. Its special case states that if $h(x) = f(g(x))$, $g$ is
	differentiable everywhere and $f$ is differentiable at $g(x)$, then $h'(x)
	= f'(g(x)) \cdot g'(x)$. If we apply the chain rule on $g(x) = cx$, then $h(x) =
	f(cx)$ and we get that $\frac{\d}{\d x} f(cx) = h'(x) = f'(g(x)) \cdot
	g'(x) = f'(cx) \cdot c$. Evaluated at $x = 0$, this becomes $c \cdot
	f'(0)$, as needed.
\end{proof}

Next, we prove \Cref{lem:correctly_defined}, which states that the $\pp$ distribution is correctly defined.
\begin{proof}[Proof of \Cref{lem:correctly_defined}.]
By scaling, we may assume without loss of generality that $s = 1$. By symmetry, it then suffices to integrate $f_{1,\alpha}$ from $0$ to $+\infty$ and verify that it integrates to $1/2$. The first branch in the definition of $f_{1,\alpha}$ integrates to\footnote{We do not show the detailed steps of how to compute the antiderivative. However, it should be noted that it is easy to check correctness by differentiating and checking that this results in the respective expressions we were integrating.}
\[
I_1(\alpha,x) = -\frac{(\alpha -1) (1-x)^{\alpha }}{2 \left(2 \left(\frac{\alpha -1}{\alpha }\right)^{\alpha }+\alpha -1\right)},
\]
and the second integrates to
\[
I_2(\alpha,x) = -\frac{\left(1-\frac{1}{\alpha ^2}\right)^{\alpha } (\alpha +1) (x+1)^{-\alpha }}{2 \left(2 \left(\frac{\alpha -1}{\alpha }\right)^{\alpha }+\alpha -1\right)}.
\]
We now write the integral as
\[
\int_{0}^\infty f_{1,\alpha}(x) \, dx =  I_1(\alpha, 1/\alpha) - I_1(\alpha, 0) + \lim_{x \rightarrow +\infty} I_2(\alpha, x) - I_2(\alpha, 1/\alpha) = 1/2 ,
\]
where in the last step we skipped some elementary manipulations.
\end{proof}

Next, we prove \cref{lem:derivatives}, which provides a formula for the derivatives of $h(s, x) = \log f_{s,\alpha}(x)$.

\begin{proof}[Proof of \cref{lem:derivatives}]
	Define 
	\begin{equation*}
		g(s, x, m) = N_{s,\alpha}(\alpha-m)\left(1 - \frac{mx}{s}\right)^{m\alpha - 1}.
	\end{equation*}
	Then one can verify that
	\begin{equation*}
		h(s, x) = \begin{cases}
	\log g(s, |x|, +1) & \text{if } \frac{|x|}{s} < \alpha^{-1}, \\
	\log g(s, |x|, -1) + \alpha\log \left(1 - \frac{1}{\alpha^2}\right) & \text{otherwise},
\end{cases}
	\end{equation*}
	When taking derivatives, we can ignore the additive $\alpha\log\left(1 - \frac{1}{\alpha^2}\right)$ in the second branch as $\alpha$ is constant.

	We start with analyzing the derivatives of $g$ with respect to $x$. For $m = \pm 1, x \ne 0$:
	\begin{align*}
		\left[\der\lambda\log(g(s, |x + \lambda|, m))\right]_{\lambda=0}
	&=
		\left[\underbrace{\der\lambda\log\left(N_{s,\alpha} \cdot (\alpha-m)\right)}_{0} + \der\lambda (m\alpha - 1) \log \left(1 - \frac{m|x + \lambda|}{s}\right)\right]_{\lambda=0}
	\\&=
		\left[(m\alpha - 1) \frac{\sgn(x + \lambda)m}{m|x + \lambda|-s}\right]_{\lambda=0}
		= \sgn(x) \cdot
\begin{dcases}
	\frac{\alpha - 1}{|x|-s} & \text{if } m = 1, \\
	\frac{-\alpha - 1}{|x|+s} & \text{otherwise}.
\end{dcases}
	\end{align*}
%
	Now, for $x \ne 0$, $|x| / s \ne \alpha^{-1}$, it holds:
	\begin{align}
		\label{eq:partial_wrt_x}
		\left[\der{\lambda} h(s, x + \lambda)\right]_{\lambda=0}
		&=
		\sgn(x) \cdot
\begin{dcases}
	\frac{\alpha - 1}{|x|-s}
	& \text{if } \frac{|x|}{s} < \alpha^{-1}, \\
	\frac{-\alpha - 1}{|x|+s}
	& \text{otherwise}.
\end{dcases}
	\end{align}

	Finally, for $|x| / s = \alpha^{-1}$, each branch gives us a one-sided derivative at $x = \pm s / \alpha$, and one can verify that both derivatives are equal and therefore \cref{eq:partial_wrt_x} holds for all $x \ne 0$.

	Similarly, for $m = \pm1, x \ne 0$:
	\begin{align*}
		\left[\der\lambda\log(g(e^{\lambda/s}s, |x|, m))\right]_{\lambda=0}
	&=
		\bigg[\der\lambda\log N_{e^{\lambda/s}s,\alpha} + \underbrace{\der\lambda\log (\alpha-m)}_0 + \der\lambda(m\alpha - 1) \log \left(1 - \frac{m|x|}{e^{\lambda/s}s}\right)\bigg]_{\lambda=0}
	\\&=
		\left[\der\lambda\log e^{-\lambda/s} + \der\lambda(m\alpha - 1) \log \left(1 - \frac{m|x|}{e^{\lambda/s}s}\right)\right]_{\lambda=0}
	\\&=
		\left[-\frac1s + \frac{(m\alpha - 1)\cdot m|x|}{s\left(e^{\lambda/s}s - m|x|\right)}\right]_{\lambda=0}
	=
\begin{dcases}
	-\frac1s + \frac{(\alpha - 1)|x|}{s(s - |x|)}
		& \text{if } m = 1, \\
	-\frac1s + \frac{(\alpha+1)|x|}{s(s + |x|)}
		& \text{otherwise}.
\end{dcases}
	\end{align*}
	Thus, for $x \ne 0, |x| / s \ne \alpha^{-1}$:
	\begin{align}
		\label{eq:partial_wrt_exp_s}
		\left[\der{\lambda} h(e^{\lambda/s}s, x)\right]_{\lambda=0}
	=
\begin{dcases}
	-\frac1s + \frac{(\alpha - 1)|x|}{s(s - |x|)}
		& \text{if } \frac{|x|}{s} < \alpha^{-1}, \\
	-\frac1s + \frac{(\alpha+1)|x|}{s(s + |x|)}
		& \text{otherwise}.
\end{dcases}
	\end{align}
	Again, the two branches are equal for $|x| / s = \alpha^{-1}$ and \cref{eq:partial_wrt_exp_s} hence holds for all $x \ne 0$.
\end{proof}

\begin{proof}[Proof of \cref{lem:differential_equality}]
	In this whole proof, fix $s > 0, \alpha > 1$. We will carry through the proof with $s$ and $\alpha$, and in the very end substitute $s = 1/\gamma$ and $\alpha = \eps/\gamma$.

	Per \cref{lem:derivatives}, it holds:
	\begin{align}
		\Bigg[\left|\der\lambda\log(f_{s, \alpha}(x + \lambda))\right|\Bigg]_{\lambda=0}
		&=
\begin{dcases}
	\left|\frac{\alpha - 1}{|x|-s}\right|
	=
	\frac{\alpha - 1}{s - |x|}
	& \text{if } \frac{|x|}{s} < \alpha^{-1}, \\
	\left|\frac{-\alpha - 1}{|x|+s}\right|
	=
	\frac{\alpha + 1}{s + |x|}
	& \text{otherwise}.
\end{dcases}
	\end{align}
	To justify the removal of the absolute value on the right-hand side, note that $\alpha > 1$ and thus $\alpha - 1 > 0$ and $-\alpha - 1 < 0$. Then consider two cases: if $|x| / s < \alpha^{-1} < 1$, then $|x| - s < 0$ and thus $(\alpha - 1) / (|x| - s) < 0$, and if $|x| / s \ge \alpha^{-1} > 0$, then $|x| + s > 0$ and thus $(-\alpha - 1) / (|x| + s) < 0$.

	Similarly,
	\begin{align*}
		\Bigg[\left|\der\lambda\log(f_{e^{\lambda/s}s, \alpha}(x))\right|\Bigg]_{\lambda=0}
	&=
\begin{dcases}
	\left|
	-\frac1s + \frac{(\alpha - 1)|x|}{s(s - |x|)}
	\right|
	=
	\frac1s - \frac{(\alpha - 1)|x|}{s(s - |x|)}
		& \text{if } \frac{|x|}{s} < \alpha^{-1}, \\
	\left|
	-\frac1s + \frac{(\alpha+1)|x|}{s(s + |x|)}
	\right|
	=
	-\frac1s + \frac{(\alpha+1)|x|}{s(s + |x|)}
		& \text{otherwise}.
\end{dcases}
	\end{align*}
	To again justify the removal of absolute values, we distinguish two cases. If $|x| / s < \alpha^{-1}$, then
	\begin{align*}
		\frac{s}{|x|} > \alpha
		\implies
		\frac{s - |x|}{|x|} > \alpha - 1
		\implies
		\frac{s - |x|}{(\alpha - 1)|x|} > 1
		\implies
		\frac{(\alpha - 1)|x|}{s - |x|} < 1
		\implies
		\frac{(\alpha - 1)|x|}{s(s - |x|)} < \frac1s,
	\end{align*}
	and thus, $\frac1s - \frac{(\alpha - 1)|x|}{s(s - |x|)} > 0$. If $|x| / s \ge \alpha^{-1}$, then, similarly,
	\begin{align*}
		\frac{s}{|x|} \le \alpha
		\implies
		\frac{s + |x|}{|x|} \le \alpha + 1
		\implies
		\frac{s + |x|}{(\alpha + 1)|x|} \le 1
		\implies
		\frac{(\alpha + 1)|x|}{s + |x|} \ge 1
		\implies
		\frac{(\alpha + 1)|x|}{s(s + |x|)} \ge \frac1s,
	\end{align*}
	and thus, $\frac1s - \frac{(\alpha + 1)|x|}{s(s + |x|)} \le 0$.

	Now we can combine both results to conclude that
	\begin{align*}
		&\Bigg[\left|\der\lambda\log(f_{s, \alpha}(x + \lambda))\right|+ 
		\left|\der\lambda\log(f_{e^{\lambda/s}s, \alpha}(x))\right|\Bigg]_{\lambda=0}
		\\
		&=
\left.\begin{dcases}
	\frac{\alpha - 1}{s - |x|}
	+
	\frac1s - \frac{(\alpha - 1)|x|}{s(s - |x|)}
	=
	\frac{(s - |x|) (1 + \alpha - 1)}{s(s - |x|)}
		&\text{if } \frac{|x|}{s} < \alpha^{-1},\\
	\frac{\alpha + 1}{s + |x|}
	-
	\frac1s + \frac{(\alpha+1)|x|}{s(s + |x|)}
	=
	\frac{(s + |x|) (\alpha + 1 - 1)}{s(s + |x|)}
		&\text{otherwise};
\end{dcases}\right\} = \frac{\alpha}{s}.
	\end{align*}

	To conclude the proof, set $s = 1 / \gamma$, $\alpha = \eps/\gamma$ and observe that $s > 0$ and $\alpha > 1$ as needed. Thus, all previous claims hold, and we have
	\begin{align*}
		\Bigg[\left|\der\lambda\log(f_{1/\gamma, \eps/\gamma}(x + \lambda))\right| + 
		\left|\der\lambda\log(f_{e^{\lambda\gamma}/\gamma, \eps/\gamma}(x))\right|\Bigg]_{\lambda=0}
		&= \frac{\eps/\gamma}{1/\gamma} = \varepsilon.
	\end{align*}
\end{proof}

Now we prove \cref{lem:variance}, which states that the $\pp$ distribution has mean 0 and calculates its variance.

\begin{proof}[Proof of \cref{lem:variance}.]
Assuming the expectation is defined, it has to be $0$ as the distribution is symmetric around this point. We now focus on the variance; by proving that the variance is finite, we also get that the expectation is defined and therefore $0$. Similarly to proof of \Cref{lem:correctly_defined}, we first separately give the indefinite integrals of the two branches, this time of $x^2 f_{1,\alpha}(x)$:
\[
I_1^{\sigma^2} = -\frac{(\alpha -1) (1-x)^{\alpha } \left(\alpha  (\alpha +1) x^2+2 \alpha  x+2\right)}{2 (\alpha +1) (\alpha +2) \left(2 \left(\frac{\alpha -1}{\alpha }\right)^{\alpha }+\alpha -1\right)}
\]
\[
I_2^{\sigma^2} = \frac{\left(1-\frac{1}{\alpha ^2}\right)^{\alpha } (\alpha +1) (x+1)^{-\alpha } \left(-\left((\alpha -1) \alpha  x^2\right)-2 \alpha  x-2\right)}{2 (\alpha -2) (\alpha -1) \left(2 \left(\frac{\alpha -1}{\alpha }\right)^{\alpha }+\alpha -1\right)}.
\]
We may now use this to write the integral as
\begin{align*}
\sigma_{1,\alpha}^2 &= 2\int_0^{+\infty} x^2 f_{1,\alpha}(x) \, dx = I_{1}^{\sigma^2}\Big(\alpha, \frac{1}{\alpha}\Big) - I_{1}^{\sigma^2}(\alpha, 0) + \lim_{{x \to +\infty}} I_{2}^{\sigma^2}(\alpha, x) - I_{2}^{\sigma^2}\Big(\alpha, \frac{1}{\alpha}\Big) \\&= \frac{\left(\frac{1}{\alpha }+1\right)^{-\alpha } \left(\left(19 \alpha ^2+5\right) \left(1-\frac{1}{\alpha ^2}\right)^{\alpha }+(\alpha -2) (\alpha -1)^2 \left(\frac{1}{\alpha }+1\right)^{\alpha }\right)}{\left(2 \left(\frac{\alpha -1}{\alpha }\right)^{\alpha }+\alpha -1\right) \left(\alpha ^4-5 \alpha ^2+4\right)} \,.
\end{align*}
\end{proof}

\section{Other Mechanisms for Smooth Sensitivity} \label{sec:stddev_of_others}
In this section, we calculate the variances of the generalized Cauchy distribution of \cite{nissim2007smooth}, the Student's~T distribution \cite{bun2019average}, and the Laplace distribution, when used with the smooth sensitivity framework.
%
%
We use these variances in \Cref{fig:plot_stddev} in order to compare our distribution with these two distributions.



\begin{claim}
When scaled according to \citet{nissim2007smooth}, the generalized Cauchy distribution has, for $c > 3$ and $0 < \gamma < \eps/(c+1)$, a standard deviation
\[
\frac{c+1}{(\eps-\gamma (c+1)) \sqrt{2 \cos \left(\frac{2 \pi }{c}\right)+1}} \,.
\]
For other values of $c>0,\gamma>0$, the standard deviation is infinite.
\end{claim}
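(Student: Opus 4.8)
The claim concerns the generalized Cauchy distribution of \citet{nissim2007smooth}, whose density (for a shape parameter $c > 0$) is proportional to $\frac{1}{1 + |x|^c}$. When this distribution is used with smooth sensitivity, the Nissim--Raskhodnikova--Smith calibration scales it by a factor that depends on $\gamma$, $\eps$, and $c$; the requirement $\gamma < \eps/(c+1)$ is exactly the admissibility condition from their Lemma on the generalized Cauchy noise (it ensures the privacy-loss argument goes through). So the plan has three pieces: (1) pin down the normalizing constant and the scale factor, (2) compute the second moment of the standard (scale-$1$) generalized Cauchy density, and (3) multiply the variance by the square of the scale factor and take the square root, checking along the way where the moment becomes infinite.

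\textbf{Step 1: normalization and scale.} First I would recall that $\int_{-\infty}^{\infty} \frac{dx}{1+|x|^c} = \frac{2\pi/c}{\sin(\pi/c)}$ for $c > 1$ (a standard beta-integral evaluation via the substitution $u = \frac{|x|^c}{1+|x|^c}$), so the normalized density is $h_c(x) = \frac{\sin(\pi/c)}{2\pi/c}\cdot\frac{1}{1+|x|^c}$. Then I would state the scaling prescribed by \cite{nissim2007smooth}: the released quantity is $q(D) + \frac{\SS_q^\gamma(D)}{\beta}\cdot Z$ where $Z \sim h_c$ and $\beta$ is the largest value for which the mechanism is $\eps$-DP; working through their privacy calculation gives $\beta = \frac{\eps - \gamma(c+1)}{c+1}$, which is positive precisely when $\gamma < \eps/(c+1)$. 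Hence the noise scale is $s = \frac{\SS_q^\gamma(D)}{\beta} = \frac{(c+1)\,\SS_q^\gamma(D)}{\eps - \gamma(c+1)}$, and I will normalize $\SS_q^\gamma(D) = 1$ to match the figure's convention (the standard deviation then scales linearly in $\SS_q^\gamma(D)$).

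\textbf{Step 2: the second moment.} I would compute $\Var(Z) = \int_{-\infty}^{\infty} x^2 h_c(x)\,dx = \frac{\sin(\pi/c)}{2\pi/c}\cdot\int_{-\infty}^\infty \frac{x^2}{1+|x|^c}\,dx$. The inner integral converges iff $c - 2 > 1$, i.e.\ $c > 3$, which is the source of the stated restriction; for $c \le 3$ the variance is infinite (and for $c \le 1$ even the density fails to normalize, so "for other values the standard deviation is infinite" covers both failure modes). Using the same beta-function substitution, $\int_{-\infty}^\infty \frac{x^2\,dx}{1+|x|^c} = \frac{2}{c}\,B\!\big(\tfrac{3}{c}, 1-\tfrac{3}{c}\big) = \frac{2\pi/c}{\sin(3\pi/c)}$. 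Dividing by the normalizer gives $\Var(Z) = \frac{\sin(\pi/c)}{\sin(3\pi/c)}$. The final step is the trigonometric simplification $\frac{\sin(\pi/c)}{\sin(3\pi/c)} = \frac{1}{2\cos(2\pi/c)+1}$, which follows from the identity $\sin 3\theta = \sin\theta\,(2\cos 2\theta + 1)$ with $\theta = \pi/c$.

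\textbf{Step 3: assemble.} Multiplying, $\Var\big(\tfrac{Z}{\beta}\big) = \frac{1}{\beta^2}\cdot\frac{1}{2\cos(2\pi/c)+1} = \frac{(c+1)^2}{(\eps-\gamma(c+1))^2\,(2\cos(2\pi/c)+1)}$, and taking square roots yields exactly the claimed standard deviation $\frac{c+1}{(\eps-\gamma(c+1))\sqrt{2\cos(2\pi/c)+1}}$. I expect the main obstacle to be Step 1 --- correctly reconstructing the scale factor $\beta = (\eps - \gamma(c+1))/(c+1)$ from the original \cite{nissim2007smooth} privacy analysis, since the paper's framework uses a slightly different parametrization than ours; everything downstream is routine beta-integral and trigonometric bookkeeping. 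A secondary point to be careful about is that $2\cos(2\pi/c)+1 > 0$ whenever $c > 3$ (indeed $2\pi/c \in (0, 2\pi/3)$ so $\cos(2\pi/c) > -1/2$), so the square root is real and the expression is well-defined on the stated range.
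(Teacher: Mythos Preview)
Your proposal is correct and follows essentially the same structure as the paper's proof: compute the variance of the unscaled generalized Cauchy density, read off the scale factor $(c+1)/(\eps-\gamma(c+1))$ from the Nissim--Raskhodnikova--Smith privacy-loss bound $(|\lambda_r|+|\lambda_s|)(c+1)$, and multiply. The only difference is cosmetic: the paper delegates both the normalizing constant and the variance integral to a computer algebra system, whereas you evaluate them by hand via the beta integral $\int_0^\infty u^{s-1}/(1+u)\,du=\pi/\sin(\pi s)$ and the identity $\sin 3\theta=\sin\theta\,(2\cos 2\theta+1)$; your extra remark that $2\cos(2\pi/c)+1>0$ on $c>3$ is a nice sanity check the paper omits.
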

\begin{proof}
The generalized Cauchy distribution is defined in \citet{nissim2007smooth} as having density $\propto \frac{1}{1+|y|^\alpha}$.
Using a Computer Algebra System (such as Mathematica), it is easy to verify that the density is in fact equal to
\[
\frac{c}{2 \pi  \csc \left(\frac{\pi }{c}\right) \left(| y| ^c+1\right)}.
\]
Writing the variance as an integral and again using a CAS to evaluate it, one gets that the variance is
\[
\frac{1}{2 \cos \left(\frac{2 \pi }{c}\right)+1}
\]
for $c > 3$ and infinite otherwise.
For two neighboring datasets $D \sim D'$, let us have $\lambda_r$ such that $SS_q^\gamma(D) = e^{\lambda_r} SS_q^\gamma(D')$ and let $\lambda_s = \frac{q(D') - q(D)}{SS_q^\gamma(D)}$.
It is shown in \citet{nissim2007smooth} that the privacy loss is at most
$(|\lambda_r| + |\lambda_s|) (c+1)$.
If we substitute the bound $|\lambda_r| \leq \gamma$, the remaining privacy budget is $\eps - \gamma (c+1)$, meaning that we need to scale the distribution with
$
\frac{c+1}{\eps - \gamma (c+1)}
$ in order to ensure the bound on privacy loss is at most $\eps$, assuming $\gamma < \eps/(c+1)$. This results in a standard deviation of 
\[
\frac{c+1}{(\eps-\gamma (c+1)) \sqrt{2 \cos \left(\frac{2 \pi }{c}\right)+1}}.
\]
\end{proof}

\begin{claim}
When scaled according to \citet{bun2019average}, the Student's T distribution has, for $d > 2$ and $0 < \gamma < \eps/(d+1)$, a standard deviation
\[
\sqrt{\frac{d}{d-2}}\cdot\frac{d+1}{2\sqrt{d}(\eps - \gamma(d+1))}
\]
For other values of $d>0,\gamma>0$, the standard deviation is infinite.
\end{claim}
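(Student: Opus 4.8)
The plan is to mirror the proof of the preceding claim about the generalized Cauchy distribution. First I would fix the normalization: the Student's~T distribution with $d$ degrees of freedom (as used by \citet{bun2019average}) has density proportional to $\left(1+x^2/d\right)^{-(d+1)/2}$, and its variance equals $\frac{d}{d-2}$ when $d>2$ and is infinite otherwise. This is the standard second moment of Student's~T and can be obtained either by quoting it, or directly from a Beta-function integral (or a CAS, just as the variance of the generalized Cauchy density was computed earlier). In particular, for $d\le 2$ the standard deviation is infinite no matter how we scale, which already disposes of part of the ``other values'' statement.

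Next I would invoke the privacy analysis of the Student's~T mechanism from \citet{bun2019average}. As in the generalized Cauchy case, for neighboring $D\sim D'$ write $e^{\lambda_r}=\SS_q^\gamma(D)/\SS_q^\gamma(D')$ and $\lambda_s=(q(D')-q(D))/\SS_q^\gamma(D)$; the definition of $\gamma$-smooth sensitivity gives $|\lambda_r|\le\gamma$ and $|\lambda_s|\le 1$. When the released noise is $\sigma\cdot\SS_q^\gamma(D)\cdot T_d$ with $T_d$ the standard Student's~T, their bound on the privacy loss decomposes into a scaling term that consumes at most $(d+1)|\lambda_r|\le(d+1)\gamma$ of the budget and a shift term bounded by $\frac{d+1}{2\sqrt d}\cdot\frac{|\lambda_s|}{\sigma}\le\frac{d+1}{2\sqrt d\,\sigma}$ (the constant $\frac{d+1}{2\sqrt d}$ is exactly the supremum of $\left|\frac{\d}{\d x}\log f_{T_d}(x)\right|$, attained at $x=\sqrt d$). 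Requiring the sum to be at most $\eps$ forces $\sigma\ge\frac{d+1}{2\sqrt d\,(\eps-\gamma(d+1))}$, which is positive exactly when $\gamma<\eps/(d+1)$, and taking $\sigma$ equal to this value is admissible.

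Finally, the standard deviation of the released noise is $\sigma$ times the standard deviation $\sqrt{d/(d-2)}$ of $T_d$, giving $\sqrt{\frac{d}{d-2}}\cdot\frac{d+1}{2\sqrt d\,(\eps-\gamma(d+1))}$ for $d>2$ and $0<\gamma<\eps/(d+1)$. For all remaining parameter values either the base variance is infinite ($d\le 2$) or there is no admissible finite scaling ($\gamma\ge\eps/(d+1)$, so the ``remaining budget'' $\eps-\gamma(d+1)$ is nonpositive), and hence the standard deviation is infinite.

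I expect the only real obstacle to be bookkeeping: matching the exact parametrization and constants in \citet{bun2019average} so that the scale-change term contributes the factor $(d+1)$ (hence the threshold $\gamma<\eps/(d+1)$) while the shift term contributes $\frac{d+1}{2\sqrt d}$, and verifying that $|\lambda_r|\le\gamma$ and $|\lambda_s|\le 1$ are the binding constraints rather than the sharper $|\lambda_s|\le\min(1,\SS_q^\gamma(D)/\SS_q^\gamma(D'))$. Once this decomposition is pinned down, the rest is the same elementary substitution as in the generalized Cauchy proof.
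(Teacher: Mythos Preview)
Your proposal is correct and follows essentially the same route as the paper: quote the variance $d/(d-2)$ of Student's~T, invoke the privacy-loss decomposition from \citet{bun2019average} of the form $(d+1)|\lambda_r|+\frac{d+1}{2\sqrt d}|\lambda_s|$, use $|\lambda_r|\le\gamma$ to fix the scaling factor, and multiply. You add a bit more than the paper does---the origin of the constant $\frac{d+1}{2\sqrt d}$ and the explicit handling of the infinite cases---but the argument is the same.
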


\begin{proof}
The variance of the T distribution with $d$ degrees of freedom is $\frac{d}{d-2}$. At the same time, for $\lambda_r$, $\lambda_s$ defined as in the proof above, the privacy loss is at most
\[
|\lambda_r| \cdot(d+1)+|\lambda_s| \cdot \frac{d+1}{2 \sqrt{d}}.
\]
We have $|\lambda_r| \leq \gamma$. The remaining budget for privacy loss caused by the distribution shift is $\eps - \gamma\,(d+1)$, meaning that we need
\[
|\lambda_s| \leq \nu = \frac{2\sqrt{d}(\eps - \gamma(d+1))}{d+1}
\]
assuming $\gamma < \eps/(d+1)$.
To achieve $\eps$-differential privacy, for a fixed value of $\gamma$, we have to scale the distribution with $\nu^{-1}$. This results in a standard deviation of
\[
\sqrt{\frac{d}{d-2}}\cdot\frac{d+1}{2\sqrt{d}(\eps - \gamma(d+1))}.
\]
\end{proof}

\begin{claim}
When scaled according to \citet{nissim2007smooth}, the Laplace distribution has, for $0 < \gamma < \eps/\log \delta^{-1}$, a standard deviation
\[
\frac{\sqrt{2}}{\eps - \gamma \log 2\delta^{-1}}.
\]
\end{claim}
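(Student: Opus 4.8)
The plan is to follow the same template as the two preceding claims: isolate the standard deviation of the base distribution as a function of its scale, work out how much the scale must be inflated to absorb the worst-case privacy loss incurred by scaling the noise to the smooth sensitivity, and multiply the two. A Laplace variable $\Lap(b)$ has standard deviation $\sqrt{2}\,b$, so it remains only to determine the smallest admissible scale, written as a multiple $b$ of $\SS_q^\gamma(D)$.

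Next I would recall the privacy analysis of the Laplace mechanism under smooth sensitivity from \citet{nissim2007smooth}, using the same notation as in the two preceding proofs: for neighbors $D \sim D'$ let $\lambda_r$ satisfy $\SS_q^\gamma(D) = e^{\lambda_r}\SS_q^\gamma(D')$ and let $\lambda_s = \frac{q(D') - q(D)}{\SS_q^\gamma(D)}$, so that $|\lambda_r| \le \gamma$ and $|\lambda_s| \le 1$. If the mechanism releases $q(D) + b\,\SS_q^\gamma(D)\cdot Z$ with $Z \sim \Lap(1)$, then the log-density ratio between the two output distributions at the released value equals $\lambda_r + |Z| - e^{\lambda_r}\,|Z + \lambda_s/b|$. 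In contrast with the heavy-tailed distributions, this is unbounded as a function of $Z$, which is exactly why the Laplace mechanism only achieves approximate differential privacy: by the triangle inequality the ratio is at most $|\lambda_s|/b$ plus a dilation term that is negative when $\lambda_r \le 0$ and, when $\lambda_r > 0$, is bounded by $\gamma\,|Z|$ up to $e^\gamma$ factors. Since $\Pr[\,|\Lap(1)| > t\,] = e^{-t}$, asking that the event on which the dilation term exceeds its budget have probability at most $\delta$ (with the standard per-side split that contributes the factor $2$ inside the logarithm) forces $|Z| \le \log(2\delta^{-1})$ up to lower-order terms, so the dilation costs at most $\gamma\log(2\delta^{-1})$ of the budget. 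Hence it suffices that $1/b + \gamma\log(2\delta^{-1}) \le \eps$, i.e.\ $b = \frac{1}{\eps - \gamma\log(2\delta^{-1})}$, which is well-defined whenever this denominator is positive. Multiplying by $\sqrt{2}$ yields the claimed standard deviation, and for $\gamma \to 0$ it reduces to $\sqrt{2}/\eps$, matching the ordinary Laplace mechanism $\Lap(\SS_q^\gamma(D)/\eps)$.

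The main obstacle is bookkeeping rather than conceptual: carrying out the tail-probability estimate carefully enough that the dilation term comes out as $\gamma\log(2\delta^{-1})$ (handling both signs of $\lambda_r$, bounding $1 - e^{-|\lambda_r|} \le |\lambda_r|$ and $e^{\lambda_r}|\lambda_s|/b \le 1/b$, and doing the $\delta/2$-per-side union bound over the two tails of $Z$), together with checking that the remaining budget $\eps - \gamma\log(2\delta^{-1})$ is positive in the stated range of $\gamma$ so that the scale $b$, and therefore the standard deviation, is finite. Everything else is a direct substitution that mirrors the two earlier proofs.
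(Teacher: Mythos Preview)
Your proposal is correct and follows the same template as the paper: identify the standard deviation of the base distribution, use the privacy-loss bound from \citet{nissim2007smooth} to determine the required scale, and multiply. The paper's own proof is a one-liner that simply cites the bound $|\lambda_r|\log 2\delta^{-1} + |\lambda_s|$ from \citet{nissim2007smooth} and substitutes $|\lambda_r|\le\gamma$; you instead sketch a re-derivation of that bound via the explicit log-density ratio and a tail cutoff on $|Z|$, which is unnecessary here (and your sketch has some sign slips and ``up to $e^\gamma$ factors'' hand-waving that the cited bound does not actually need), but the resulting scale $b = 1/(\eps - \gamma\log 2\delta^{-1})$ and standard deviation $\sqrt{2}\,b$ match the paper exactly.
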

\begin{proof}
The proof is essentially the same as the proofs above. It is shown by \cite{nissim2007smooth} that the privacy loss is at most $|\lambda_r| \log 2\delta^{-1} + |\lambda_s|$. This means we have to scale the noise to be inversely proportional to $\eps - \gamma \log 2\delta^{-1}$. This leads to the claimed standard deviation.
\end{proof}

\end{document}